\pdfoutput=1
\documentclass[11pt]{amsart}

\usepackage[T1]{fontenc}
\usepackage[utf8]{inputenc}
\usepackage[main=english]{babel}
\usepackage{lmodern}
\usepackage[a4paper,margin=1in]{geometry}
\usepackage[final,protrusion=true,expansion=false]{microtype}
\usepackage{amsmath,amssymb,mathtools}
\usepackage{amsthm}
\usepackage{enumitem}
\usepackage[sort,compress]{cite}

\usepackage[hypertexnames=false]{hyperref}
\hypersetup{%
	colorlinks=true,
	linkcolor=blue,
	citecolor=blue,
	urlcolor=blue,
	pdftitle={A Bundle-Theoretic Formulation of Phonons in Crystalline Phases},
	pdfauthor={Aleksey Prots},
	pdfkeywords={crystalline phase, phonons, torus bundle, Goldstone modes, connection, jet bundle}
}

\theoremstyle{plain}
\newtheorem{proposition}{Proposition}[section]
\theoremstyle{definition}
\newtheorem{definition}[proposition]{Definition}
\theoremstyle{remark}
\newtheorem{remark}[proposition]{Remark}

\title[Bundle-Theoretic Formulation of Phonons]{A Bundle-Theoretic Formulation of Phonons in Crystalline Phases}
\author{Aleksey Prots}
\address{Department of Theoretical Physics and Computer Technologies, Kuban State University, Krasnodar, Russian Federation}
\email{BWPSM.lab@gmail.com}
\date{April 28, 2026}
\subjclass[2020]{53C05, 55R10, 74B05, 74J05}
\keywords{Crystalline phase, order parameter, phonons, Goldstone modes, frame-bundle reduction, associated torus bundle, Ehresmann connection, jet bundle, Lagrangian field theory}

\begin{document}

\begin{abstract}
	Phonons are usually introduced by choosing a local displacement field. 
	This paper keeps that local description, but identifies the global geometric 
	object represented by it. The aim is not to change the local acoustic 
	equations, but to describe the global configuration space of the translational 
	order parameter on a fixed crystallographic background and to give a globally 
	defined replacement for the displacement gradient.
	
	After the orientational part of the crystalline order has been fixed by a 
	reduction of the orthonormal frame bundle to a discrete point group, the 
	translational order parameter is described as a section of an associated 
	torus bundle. In a symmorphic crystal the point group acts on the translation 
	torus linearly, whereas in a nonsymmorphic crystal the action is affine and 
	records the extension class of the crystallographic group. Relative to the 
	fixed point-group bundle, the discreteness of the structure group gives a 
	canonical flat Ehresmann connection on the associated torus bundle. The 
	corresponding covariant differential of the translational field is a globally 
	defined object which locally coincides with the ordinary displacement gradient.
	
	This covariant differential is then used to formulate the phonon sector as a 
	first-order Lagrangian field theory. When the flat torus holonomy fixes an 
	equilibrium point, linearization about the corresponding covariantly constant 
	section gives the usual local displacement field. For derivative-only quadratic 
	elastic Lagrangians satisfying the standard objectivity condition, the theory 
	reduces locally to linear elasticity and to the standard acoustic phonon 
	spectrum. If such a global equilibrium section does not exist, the same 
	linear theory is understood locally on defect-free simply connected patches.
\end{abstract}

\maketitle

\section{Introduction}

Phonons are usually described locally by displacement fields whose derivatives enter the elastic energy. 
This local description is sufficient for deriving the standard equations of elasticity. 
The purpose of this paper is to determine the global geometric object represented by such a displacement field when the crystalline background has nontrivial orientational structure.

Crystalline order contains two distinct components. 
The first component is translational. The position of the crystal relative to its lattice is defined only modulo a lattice vector; hence the natural target is not \(\mathbb R^d\), but the torus
\[
T_\Pi=\mathbb{R}^d/\Pi,
\]
where \(\Pi\subset\mathbb{R}^d\) is the translation lattice. 
The second component is orientational. A crystalline phase does not preserve the full rotational symmetry, but only a discrete point group \(P\). 
Geometrically, this orientational order is described by a reduction of the orthonormal frame bundle to \(P\).

The full order parameter of a crystalline phase may therefore be written schematically as
\[
\Phi=(u,\sigma_P),
\]
where \(u\) is the translational part and \(\sigma_P\) encodes the orientational reduction. This paper does not try to study the full coupled dynamics of both components. Instead, the orientational sector is fixed once and for all. Equivalently, we choose a principal \(P\)-bundle \(Q\) and treat it as part of the background. The remaining dynamical variable is the translational order parameter, and globally it is a section of the associated torus bundle
\[
\pi_Y:Y_\Pi=Q\times_P T_\Pi\to X.
\]
In this sense the paper studies the phonon sector on a fixed crystallographic background. The variable denoted by \(u\) should be understood as the translational phase of the crystalline order, not as a full nonlinear deformation map of a body into Euclidean space.

\medskip

\noindent\textbf{Relation to previous work and novelty.}
The local theory recovered below is the standard harmonic theory of lattice dynamics and elastic waves in crystals \cite{BornHuang,LandauLifshitzElasticity,NyePhysicalProperties,MusgraveCrystalAcoustics}. 
The interpretation of phonons as Goldstone modes of broken translations, and the elimination of independent rotational Goldstone variables by inverse-Higgs-type relations, is also standard in effective descriptions of spacetime symmetry breaking \cite{LeutwylerPhonons,LowManohar,WatanabeMurayama2013,BraunerInverseHiggs,BraunerBook}. The broader nonlinear-realization framework goes back to the CCWZ construction \cite{ColemanWessZumino1969,CallanColemanWessZumino1969}, while modern effective-field-theory discussions of solids and elastic variables provide closely related formulations \cite{NicolisEtAl2015,HayataHidaka2014}. 
A separate geometric literature describes dislocations and disclinations by torsion, curvature, and non-Riemannian geometric structures \cite{BilbyBulloughSmith1955,Kroner1981,deWit1973,KatanaevVolovich1992,Katanaev2005,KleinertGaugeFields,MuraMicromechanics}. Recent dual descriptions relating elasticity and defects to fracton tensor gauge theories give another related but distinct perspective \cite{PretkoRadzihovsky2018,PretkoZhaiRadzihovsky2019,PretkoChenYou2020,GromovRadzihovsky2024}. 
The present paper occupies a more elementary but useful layer between these two viewpoints: the defect-free phonon sector is formulated globally before defects or dynamical orientational variables are introduced. 
The novelty is therefore not a new local phonon spectrum. 
It is the identification of the translational order parameter with a section of an associated torus bundle, including affine nonsymmorphic gluing, and the construction of the canonical covariant differential whose local representative is the ordinary displacement gradient.

The main results are the following.
\begin{enumerate}[label=\textup{(\roman*)}]
\item The global configuration space of the translational order parameter on the fixed orientational background is the associated torus bundle
\[
Y_\Pi=Q\times_P T_\Pi.
\]
\item The symmorphic and nonsymmorphic cases are distinguished by the action of \(P\) on \(T_\Pi\): the former gives a linear torus action, while the latter gives an affine torus action determined by the crystallographic extension.
\item Relative to the fixed \(P\)-bundle \(Q\), the discreteness of \(P\) gives a canonical flat Ehresmann connection \(\Gamma_0\) on \(Y_\Pi\to X\).
\item For every section \(\sigma:X\to Y_\Pi\), the covariant differential \(\nabla^{\Gamma_0}\sigma\) is a globally defined section of
\[
T^*X\otimes(Q\times_P\mathbb R^d),
\]
and in admissible local trivializations it is represented by the ordinary first derivative of the local displacement field.
\end{enumerate}

The construction does not change the local phonon equations once the usual elastic assumptions are imposed. After choosing an admissible local trivialization and linearizing about a covariantly constant equilibrium section, derivative-only quadratic Lagrangians satisfying the standard objectivity condition reproduce the usual displacement field, the small-strain tensor, the elastic equations, and the acoustic dispersion relations.
Its role is to identify the global configuration bundle for the translational order parameter and the covariant differential that gives the invariant meaning of the local displacement gradient.

Three points should be fixed at the outset. 
First, we assume that the orthonormal frame bundle admits a global reduction \(Q\subset F_g(M)\) to the point group \(P\). 
This is a nontrivial topological condition. If such a reduction does not exist, or if defects are present, the construction should be understood on the defect-free region. 
Second, in the nonsymmorphic case the point group acts on \(T_\Pi\) by affine rather than purely linear transformations. 
This affine action is retained in the definition of \(Y_\Pi\). 
Its differential is the usual linear point-group action on vertical tangent spaces. 
Consequently, the local linear phonon theory depends only on the linear part of the action, whereas the affine shifts enter the global gluing of the torus-valued field. 
Third, the global linearization around an undeformed reference section requires a covariantly constant equilibrium section. Such a section exists only when the holonomy of the flat torus bundle fixes a point of \(T_\Pi\). If this fixed-point condition fails, the linear phonon theory remains a local theory, or a theory on a chosen defect-free simply connected region with a chosen background lift.

The formulation uses the language of associated bundles, Ehresmann connections, and first jet bundles. 
In these terms the phonon sector becomes a first-order Lagrangian field theory on \(J^1Y_\Pi\). 
Physically, this gives a global geometric formulation of the standard Goldstone interpretation of phonons: they are the low-energy modes associated with broken translations. 
The rotational Goldstone variables are not independent acoustic modes; in the low-energy theory they are expressed through derivatives of the displacement field.

The structure of the paper is as follows. Sec.~\ref{sec:crystal_symmetry} fixes the crystallographic data and the distinction between symmorphic and nonsymmorphic cases. Sec.~\ref{sec:orderparameter} explains the translational and orientational parts of the crystalline order parameter. Sec.~\ref{sec:associated_bundle} constructs the associated torus bundle and the canonical flat connection. Sec.~\ref{sec:first_order_lagrangian} formulates the first-order Lagrangian theory and derives the local phonon limit. The cubic-crystal check of the local elastic limit is given in Appendix~\ref{app:cubic_example}.

Throughout the paper, \(M\) denotes the spatial base manifold on which the crystal is described, whereas \(X\) denotes the base of the field theory.
In the static case \(X=M\).
In the dynamical case,
\[
X=\mathbb{R}_t\times M.
\]
Spatial geometric structures over \(M\) are then pulled back to \(X\) along the projection
\[
\mathrm{pr}_M:\mathbb{R}_t\times M\to M.
\]

\section{Crystallographic Symmetry Data}
\label{sec:crystal_symmetry}

A crystalline phase is characterized by a residual discrete subgroup of the Euclidean symmetry group. 
In an idealized isotropic medium the full Euclidean group acts as a spatial symmetry, whereas in a crystal only the motions compatible with the lattice remain: translations by lattice vectors and finite rotations belonging to the point group. 
Crystallization may therefore be viewed as a standard example of spontaneous breaking of spatial symmetries: continuous translations are reduced to a lattice, and continuous rotations are reduced to a finite subgroup of \(O(d)\) \cite{Chaikin,BornHuang,LeutwylerPhonons,NicolisEtAl2015}.

\medskip

\subsection{Spatial setting}
\label{subsec:spatial_setting}
Let \(E\) be a Euclidean space of dimension \(d\), identified with \(\mathbb{R}^d\) after a choice of origin. 
The residual symmetry of an ideal crystal is described by a discrete subgroup
\[
\Gamma \subset \mathrm{E}(d)=\mathbb{R}^d\rtimes O(d),
\]
where \(\mathrm{E}(d)\) denotes the group of Euclidean motions of \(E\).

\begin{definition}[Crystallographic group]
	\label{def:crystallographic_group}
	A crystallographic group is a discrete subgroup
	\[
	\Gamma\subset \mathrm{E}(d)
	\]
	acting on the Euclidean space \(E\) by isometries such that the orbit space \(E/\Gamma\) is compact. 
	Equivalently, by the Bieberbach theorems, \(\Gamma\) contains a full-rank translation subgroup of finite index \cite{Charlap1986}.
\end{definition}

In particular, the subgroup of pure translations
\begin{equation}
	\Pi:=\Gamma\cap \mathbb{R}^d
	\label{eq:lattice_def}
\end{equation}
is a full-rank lattice, \(\Pi\simeq \mathbb{Z}^d\). 
The quotient
\begin{equation}
	P:=\Gamma/\Pi
	\label{eq:point_group_def}
\end{equation}
is finite and is realized, through the linear parts of the elements of \(\Gamma\), as a subgroup of \(O(d)\). 
It is called the point group of the crystal. 
Thus \(\Gamma\) fits into the short exact sequence
\begin{equation}
	1\longrightarrow \Pi \longrightarrow \Gamma \longrightarrow P \longrightarrow 1.
	\label{eq:cryst_extension}
\end{equation}
In this group-extension formulation, the symmorphic case corresponds to a split extension, so the crystallographic group has the form
\[
\Gamma \simeq \Pi \rtimes P.
\]
If the extension does not split, such a semidirect-product presentation is not available in general. 
This nonsymmorphic case is represented by nontrivial affine translational data; geometrically, it includes screw axes and glide planes \cite{BradleyCracknell,InternationalTablesA}. 
For the general theory of crystallographic and space groups, see, for example, \cite{Charlap1986,BradleyCracknell,InternationalTablesA}.

\subsection{Symmorphic and Nonsymmorphic Cases}
\label{subsec:symmorphic_nonsymmorphic_affine}

The short exact sequence \eqref{eq:cryst_extension} contains more information than the linear action of the point group on the lattice. 
This additional information is encoded by the extension class of \eqref{eq:cryst_extension}, or equivalently by a group \(2\)-cocycle. 
It is this class that distinguishes the symmorphic and nonsymmorphic cases. 
For the construction below, this distinction has to be realized not only algebraically, but also as an action on the translation torus \(T_\Pi=\mathbb R^d/\Pi\). 
We use the standard description of group extensions by cocycles; see, for example, \cite{Charlap1986,BrownCohomology}.

Write an element of the Euclidean group as
\[
(a,A)\in \mathbb R^d\rtimes O(d),
\]
where \(a\in\mathbb R^d\) is the translation part and \(A\in O(d)\) is the orthogonal part. 
It acts on \(x\in\mathbb R^d\) by
\[
x\longmapsto Ax+a,
\]
and the group law is
\[
(a,A)(b,B)=(a+Ab,AB).
\]
The inclusion \(\Pi\subset\Gamma\) identifies each lattice vector \(\lambda\in\Pi\) with the pure translation \((\lambda,I)\). 
For every coset \(p\in P=\Gamma/\Pi\), choose a representative
\[
\gamma_p=(a_p,A_p)\in\Gamma.
\]
The orthogonal part \(A_p\) is independent of the chosen representative, because changing the representative by a lattice translation changes only \(a_p\). 
Moreover, conjugation by \(\gamma_p\) sends lattice translations to lattice translations:
\[
\gamma_p(\lambda,I)\gamma_p^{-1}=(A_p\lambda,I),
\qquad
\lambda\in\Pi.
\]
Hence \(A_p\Pi=\Pi\). 
Thus the point group acts on the lattice by orthogonal automorphisms,
\[
A:P\longrightarrow \mathrm{Aut}(\Pi)\subset O(d),
\qquad
p\longmapsto A_p.
\]

\begin{proposition}[Affine action of the point group on the translation torus]
	\label{prop:affine_point_action}
	Let \(\Gamma\subset \mathbb R^d\rtimes O(d)\) be a crystallographic group. 
	Let
	\[
	\Pi=\Gamma\cap\mathbb R^d
	\]
	be its translation lattice, and let
	\[
	P=\Gamma/\Pi
	\]
	be its point group. 
	For each \(p\in P\), choose a representative
	\[
	\gamma_p=(a_p,A_p)\in\Gamma.
	\]
	Then the following statements hold.
	
	\begin{enumerate}[label=\textup{(\roman*)}]
		\item The map \(p\mapsto A_p\) defines a representation
		\[
		A:P\to \mathrm{Aut}(\Pi)\subset O(d).
		\]
		
		\item The element
		\begin{equation}
			c(p,q):=a_p+A_pa_q-a_{pq}
			\label{eq:extension_cocycle}
		\end{equation}
		belongs to \(\Pi\) and defines a \(2\)-cocycle of the group \(P\) with values in the \(P\)-module \(\Pi_A\), where the action of \(P\) on \(\Pi\) is given by the representation \(A\).
		
		\item Changing the representatives \(\gamma_p\) changes \(c\) by a coboundary. Hence the class
		\[
		[c]\in H^2(P,\Pi_A)
		\]
		is an invariant of the extension \eqref{eq:cryst_extension}. The extension splits, that is \(\Gamma\simeq\Pi\rtimes P\), if and only if this class is zero.
		
		\item The formula
		\begin{equation}
			\rho(p)[v]=[A_pv+a_p],
			\qquad [v]\in T_\Pi=\mathbb R^d/\Pi,
			\label{eq:affine_torus_action}
		\end{equation}
		defines a left affine action of \(P\) on the torus \(T_\Pi\). 
		In the symmorphic case, after a suitable choice of origin and representatives, one has \(a_p=0\), and \(\rho\) becomes the linear action \([v]\mapsto[A_pv]\).
	\end{enumerate}
\end{proposition}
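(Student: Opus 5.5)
The plan is to work directly with the group law \((a,A)(b,B)=(a+Ab,AB)\) in \(\mathbb R^d\rtimes O(d)\) and treat all four items as consequences of one identity: the product of representatives
\[
\gamma_p\gamma_q=(a_p+A_pa_q,\;A_pA_q)
\]
lies in the coset \(pq\), so \(\gamma_p\gamma_q\gamma_{pq}^{-1}\in\Pi\).

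\textbf{Item (i).} The map \(\Gamma\to O(d)\), \((a,A)\mapsto A\), is a homomorphism with kernel \(\Gamma\cap\mathbb R^d=\Pi\). It therefore descends to an injective homomorphism \(P\to O(d)\), so \(A_{pq}=A_pA_q\) holds exactly. The conjugation computation already displayed before the statement gives \(A_p\Pi=\Pi\). Together these show that \(A\) is a representation into \(\mathrm{Aut}(\Pi)\).

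\textbf{Item (ii).} Since \(A_pA_q=A_{pq}\), the element \(\gamma_p\gamma_q\gamma_{pq}^{-1}\) is the pure translation \((c(p,q),I)\) with \(c(p,q)=a_p+A_pa_q-a_{pq}\); being in \(\Gamma\cap\mathbb R^d\), it lies in \(\Pi\). For the cocycle identity
\[
A_pc(q,r)-c(pq,r)+c(p,qr)-c(p,q)=0
\]
I will simply expand each term using (i): all the \(a\)'s cancel telescopically.

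\textbf{Item (iii).} For the invariance, replace \(a_p\) by \(a_p'=a_p+b_p\) with \(b_p\in\Pi\). Then
\[
c'(p,q)=c(p,q)+b_p+A_pb_q-b_{pq}=c(p,q)+(\delta b)(p,q).
\]
For the splitting criterion I argue in both directions.
\begin{itemize}
\item If the extension splits, there is a homomorphic section \(s\colon p\mapsto\gamma_p\); choosing these representatives gives \(c=0\).
\item Conversely, if \(c=\delta b\), replace \(a_p\) by \(a_p-b_p\). This makes \(c\equiv 0\), i.e.\ \(\gamma_p\gamma_q=\gamma_{pq}\), so \(p\mapsto\gamma_p\) is a splitting homomorphism. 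The map \((\lambda,p)\mapsto(\lambda,I)\gamma_p\) then gives \(\Gamma\simeq\Pi\rtimes_A P\).
\end{itemize}

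\textbf{Item (iv).} First, \(\rho(p)\) is well defined on \(T_\Pi\):
\begin{itemize}
\item changing \(v\) by \(\lambda\in\Pi\) changes \(A_pv+a_p\) by \(A_p\lambda\in\Pi\);
\item changing the representative \(a_p\) by a lattice vector does not change the class.
\end{itemize}
Then \(\rho(p)\rho(q)[v]=[A_pA_qv+A_pa_q+a_p]=[A_{pq}v+a_{pq}+c(p,q)]=\rho(pq)[v]\), because \(c(p,q)\in\Pi\). Also \(\rho(e)=\mathrm{id}\), since \(a_e\in\Pi\). So \(\rho\) is a left action by affine torus maps.

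The symmorphic statement is the only place needing care, since it involves a change of origin. I would interpret ``symmorphic'' as: there is a point \(x_0\in E\) fixed by a subgroup of \(\Gamma\) mapping isomorphically onto \(P\). Translating the origin to \(x_0\) conjugates \(\Gamma\) by \((x_0,I)\) and sends \((a,A)\) to \((a+Ax_0-x_0,A)\). Choosing the representatives in that subgroup then gives \(a_p=0\), and \(\rho\) becomes \([v]\mapsto[A_pv]\).

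I expect the main subtlety to be reconciling this with the algebraic splitting of (iii). A homomorphic section \(p\mapsto(a_p,A_p)\) of the extension is a finite subgroup of \(\mathrm{E}(d)\). Averaging over it, \(x_0=-\frac1{|P|}\sum_p a_p\) is a common fixed point, so the two notions agree.
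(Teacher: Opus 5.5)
Your argument for (i)--(iv) is essentially the paper's: the same group-law computation giving $\gamma_p\gamma_q\gamma_{pq}^{-1}=(c(p,q),I)\in\Pi$, the same telescoping cocycle check, the same coboundary and splitting arguments, and the same well-definedness and composition check for $\rho$. Deriving $A_{pq}=A_pA_q$ from the homomorphism $\Gamma\to O(d)$ with kernel $\Pi$, and noting $\rho(e)=\mathrm{id}$, are harmless refinements.

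The one real addition is the symmorphic normalization $a_p=0$. The paper asserts this but does not prove it, and your averaging argument is the right way to supply it. The sign of $x_0$ is wrong, though. With your convention $(a,A)\mapsto(a+Ax_0-x_0,A)$ you need the barycenter of the orbit of the origin, $x_0=\frac{1}{|P|}\sum_p a_p$. Indeed, $c=0$ gives $A_px_0+a_p=\frac{1}{|P|}\sum_q a_{pq}=x_0$, so this $x_0$ is fixed. Your choice $x_0=-\frac{1}{|P|}\sum_p a_p$ is not a fixed point in general, and it would give new translation parts $2a_p$ instead of $0$.
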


\begin{proof}
	For \(\lambda\in\Pi\), conjugation by \(\gamma_p=(a_p,A_p)\) gives
	\[
	\gamma_p(\lambda,I)\gamma_p^{-1}=(A_p\lambda,I)\in\Pi.
	\]
	Hence \(A_p\Pi=\Pi\). Moreover, replacing \(\gamma_p\) by \((b_p,I)\gamma_p\), \(b_p\in\Pi\), changes only the translation part \(a_p\), not the orthogonal part \(A_p\). Thus \(p\mapsto A_p\) is well defined as a map \(P\to\mathrm{Aut}(\Pi)\subset O(d)\).
	
	For \(p,q\in P\), multiplication in \(\mathbb R^d\rtimes O(d)\) gives
	\[
	\gamma_p\gamma_q
	=
	(a_p+A_pa_q,A_pA_q).
	\]
	Since \(\gamma_p\gamma_q\) and \(\gamma_{pq}\) project to the same element \(pq\in P\), the element \(\gamma_p\gamma_q\gamma_{pq}^{-1}\) lies in \(\Pi\). Its orthogonal part is therefore the identity, so \(A_pA_q=A_{pq}\). Thus \(A:P\to\mathrm{Aut}(\Pi)\) is a representation. Its translation part gives
	\[
	c(p,q):=a_p+A_pa_q-a_{pq}\in\Pi.
	\]
	Associativity of multiplication in \(\Gamma\) gives
	\[
	A_pc(q,r)-c(pq,r)+c(p,qr)-c(p,q)=0,
	\]
	hence \(c\in Z^2(P,\Pi_A)\).
	
	If the representatives are changed to \(\gamma'_p=(b_p,I)\gamma_p\), \(b_p\in\Pi\), then \(a'_p=b_p+a_p\), and the corresponding cocycle is
	\[
	c'(p,q)
	=
	c(p,q)+b_p+A_pb_q-b_{pq}.
	\]
	Thus \(c'\) and \(c\) differ by a coboundary, so the class \([c]\in H^2(P,\Pi_A)\) is independent of the chosen representatives.
	
	The extension \eqref{eq:cryst_extension} splits if and only if \([c]=0\). Indeed, if \([c]=0\), the representatives may be changed so that \(c(p,q)=0\) for all \(p,q\). Then \(\gamma_p\gamma_q=\gamma_{pq}\), and the representatives form a subgroup of \(\Gamma\) isomorphic to \(P\). Conversely, if the extension splits, representatives can be chosen inside such a subgroup, and then \(c=0\).
	
	It remains to check the action on \(T_\Pi\). The formula
	\[
	\rho(p)[v]=[A_pv+a_p]
	\]
	is well defined with respect to \(v\): if \(v\) is replaced by \(v+\lambda\), \(\lambda\in\Pi\), then \(A_p\lambda\in\Pi\). It is also independent of the chosen representative \(\gamma_p\), since replacing \(a_p\) by \(a_p+b_p\), \(b_p\in\Pi\), does not change the class in \(\mathbb R^d/\Pi\). Finally,
	\[
	\begin{aligned}
		\rho(p)\rho(q)[v]
		&=[A_p(A_qv+a_q)+a_p]  \\
		&=[A_{pq}v+a_{pq}+c(p,q)] \\
		&=[A_{pq}v+a_{pq}]
		=\rho(pq)[v],
	\end{aligned}
	\]
	because \(c(p,q)\in\Pi\) vanishes in the quotient. Therefore \(\rho\) is a left affine action of \(P\) on \(T_\Pi\).
\end{proof}

\medskip

The preceding construction can be phrased cohomologically in a way that makes precise in what sense the affine torus action records the nonsymmorphic data. 
Let
\[
0\longrightarrow \Pi_A \longrightarrow \mathbb R^d_A \longrightarrow T_\Pi \longrightarrow 0
\]
be the exact sequence of \(P\)-modules, where \(P\) acts on \(\mathbb R^d\) and \(\Pi\) by the linear representation \(A\). Here \(T_\Pi\) is regarded as a \(P\)-module through the same linear action induced by \(A\); the affine translational parts then define a cocycle with values in this linear \(P\)-module.
The classes \(\bar a_p=[a_p]\in T_\Pi\) form a \(1\)-cocycle with values in \(T_\Pi\). 
Indeed, equation \eqref{eq:extension_cocycle} gives
\[
\bar a_{pq}=\bar a_p+A_p\bar a_q.
\]
The connecting homomorphism
\[
\delta:H^1(P,T_\Pi)\longrightarrow H^2(P,\Pi_A)
\]
associated with the above exact sequence sends the class of \(\bar a\) to the extension class \([c]\) represented by \eqref{eq:extension_cocycle}. 
Since \(P\) is finite and \(\mathbb R^d_A\) is a real \(P\)-module, averaging cochains gives
\[
H^n(P,\mathbb R^d_A)=0,
\qquad n>0.
\]
Consequently, in the relevant part of the long exact sequence, the connecting homomorphism identifies the affine translational class in \(H^1(P,T_\Pi)\) with the crystallographic extension class in \(H^2(P,\Pi_A)\). 
Thus the phrase ``the affine action records the extension class'' means that the translational part of the affine action determines, through this connecting map, the cohomology class distinguishing the nonsymmorphic extension from a split semidirect product.

\begin{remark}[Practical consequence for the translational order parameter]
	\label{rem:nonsymmorphic_affine_action}
	If only the linear action \([v]\mapsto[A_pv]\) is used below, then the theory is effectively restricted to the symmorphic case, or equivalently it forgets the extension class \([c]\). 
	For a general crystallographic group, the action of \(P\) on \(T_\Pi\) must be understood as the affine action \eqref{eq:affine_torus_action}. 
	Its differential is the linear action \(A_p\). 
	Thus the vertical tangent bundle, the covariant differential, and the linear phonon spectrum depend on \(A_p\), whereas the nonsymmorphic shifts \(a_p\) enter the gluing laws for the local representatives of the torus-valued field. 
	This distinction is built into the definition of \(Y_\Pi=Q\times_P T_\Pi\).
\end{remark}

\medskip

\noindent\textbf{Example: screw-axis gluing.}
Let the lattice contain a primitive vector \(e_z\) along an axis, and let \(R\) be a rotation of order \(n\) about this axis. 
A screw generator of type \(n_m\) may be represented by
\begin{equation}
	\gamma_s=\left(\frac{m}{n}e_z,R\right),
	\qquad
	R^n=I,
	\label{eq:screw_generator}
\end{equation}
with \(m e_z\in\Pi\). Then
\begin{equation}
	\gamma_s^{\,n}=(m e_z,I)\in\Pi,
	\label{eq:screw_power_translation}
\end{equation}
so the image \(s\in P\) has order \(n\), while its representative in \(\Gamma\) contains a fractional translation. The induced action on the translation torus is
\begin{equation}
	\rho(s)[v]=\left[Rv+\frac{m}{n}e_z\right],
	\qquad
	\rho(s)^n=\mathrm{id}_{T_\Pi}.
	\label{eq:screw_affine_torus_action}
\end{equation}
Thus, on an overlap whose transition function is \(s\), a local lift of the torus-valued displacement changes by the rotation \(R\), by the fractional shift \((m/n)e_z\), and by a lattice vector. 
Differentiating removes the constant fractional shift and leaves only the linear action of \(R\) on vertical tangent vectors. 
This is the local mechanism by which nonsymmorphic data affect the global gluing of \(\sigma\), while the principal symbol of the linearized phonon equations depends only on the linear part.

\medskip

\noindent\textbf{Scope of the main construction.}
The construction below uses only the translation lattice \(\Pi\), the point group \(P\), and a fixed principal \(P\)-bundle \(Q\). 
Magnetic or time-reversing symmetries would require replacing \(P\) by the appropriate magnetic point group, and are not considered here.
\medskip

\section{The Order Parameter as a Geometric Object}
\label{sec:orderparameter}

Geometrically, an order parameter associated with the breaking of a structure group \(G\) to a closed subgroup \(H\) is described by a section of a quotient bundle. 
Let
\[
\pi_{\mathcal P} : \mathcal P \to X
\]
be a principal \(G\)-bundle over \(X\).
The right action of \(H\) on \(\mathcal P\) defines the quotient bundle
\begin{equation}
	\pi_H:\mathcal P/H\longrightarrow X,
	\label{eq:quotient_bundle}
\end{equation}
whose typical fiber is the homogeneous space \(G/H\). 
A global section
\begin{equation}
	\sigma:X\longrightarrow \mathcal P/H
	\label{eq:general_order_parameter}
\end{equation}
is then interpreted as an order-parameter field. 
Equivalently, such a section determines a reduction of the structure group from \(G\) to \(H\), that is, a principal \(H\)-subbundle \(\mathcal P_H\subset\mathcal P\). 
This is the standard bundle-theoretic form of symmetry reduction used in the geometric description of Higgs fields and spontaneous symmetry breaking \cite{HusemollerFibreBundles,KobayashiNomizu1,SardanashvilyBook,GiachettaBook,BraunerBook}.

\medskip

For crystals this general scheme has to be adapted. 
The continuous group of spatial motions of the Euclidean space \(E\simeq \mathbb R^d\) is
\begin{equation}
	G=\mathrm E(d)=\mathbb R^d\rtimes O(d),
	\label{eq:euclidean_group}
\end{equation}
whereas the residual symmetry of an ideal crystal is a discrete crystallographic group
\[
\Gamma\subset \mathrm E(d),
\]
related to the translation lattice \(\Pi\) and the point group \(P\) by the exact sequence \eqref{eq:cryst_extension}. 
Thus, in the present setting, crystalline order is best separated into two geometrically distinct parts: the translational Goldstone sector and the orientational reduction of the frame bundle.

Locally, this separation leads to the following model of the space of crystalline order. 
The translational component is described by a point of the torus \(\mathbb R^d/\Pi\), while the orientation of the elementary cell, modulo the point-group action, is described by a point of \(O(d)/P\). 
Thus the local state space has the form
\begin{equation}
	\mathcal O_{\mathrm{cryst}}
	\simeq
	(\mathbb R^d/\Pi)\times (O(d)/P).
	\label{eq:order_space}
\end{equation}
This is only a local model of crystalline order. 
More invariantly, before this separation one may regard the ideal order-parameter space as the homogeneous space \(\mathrm E(d)/\Gamma\). The product expression in \eqref{eq:order_space} is obtained after choosing local representatives and local trivializations. In the nonsymmorphic case it should not be read as a global product decomposition: the translational torus is glued over changes of crystalline frame by the affine \(P\)-action induced by the crystallographic extension. 
Globally, the translational and orientational components are realized by different geometric objects.

\medskip

Let \(X\) denote the base of the theory, as in the introduction. 
On an open set \(U\subset X\), a local crystalline order parameter is therefore represented by a pair
\begin{equation}
	(u,\varrho),
	\qquad
	u:U\to \mathbb R^d/\Pi,
	\qquad
	\varrho:U\to O(d)/P.
	\label{eq:local_order_parameter}
\end{equation}
Here \(u\) is the local translational component, while \(\varrho\) is the local orientational component.

The translational component \(u\) is a displacement field defined modulo lattice translations. 
Locally, over an open set \(U\subset X\), it may be viewed as a section of the trivial torus bundle
\begin{equation}
	U\times (\mathbb R^d/\Pi)\longrightarrow U.
	\label{eq:trivial_torus_bundle}
\end{equation}
Physically, \(u\) describes translational order. 
It is the Goldstone field associated with the spontaneous breaking of the continuous translation group \(\mathbb R^d\) to the lattice \(\Pi\), and its small fluctuations are the phonon degrees of freedom \cite{LeutwylerPhonons,WatanabeMurayama2013,AndersenBraunerHofmannVuorinen,HayataHidaka2014}. 
Although crystallization also breaks continuous rotations, the rotational Goldstone variables do not give independent acoustic modes. 
In the low-energy theory they are expressed through derivatives of the displacement field, in accordance with the inverse Higgs mechanism and the redundancy of some spacetime Goldstone variables \cite{IvanovOgievetsky1975,LowManohar,WatanabeMurayama2013,BraunerInverseHiggs}. 
Globally, however, the translational component is not naturally a section of the trivial bundle \eqref{eq:trivial_torus_bundle}; its local representatives are glued by the point-group action determined by the orientational structure.

\medskip

We now specify the geometric meaning of the orientational component. 
Let a Riemannian metric \(g\) be fixed on the spatial base manifold \(M\). 
Then the orthonormal frames form a principal bundle
\begin{equation}
	\pi_F:F_g(M)\longrightarrow M
	\label{eq:orthonormal_frame_bundle}
\end{equation}
with structure group \(O(d)\). 
If an orientation is fixed, one may instead use the oriented frame bundle
\begin{equation}
	F_g^+(M)\longrightarrow M
	\label{eq:oriented_frame_bundle}
\end{equation}
with structure group \(SO(d)\). 
In the dynamical setting, these spatial bundles are pulled back to \(X=\mathbb R_t\times M\) along the projection \(\mathrm{pr}_M:X\to M\); for simplicity, we keep the same notation for the pullbacks.

Since \(O(d)\) acts on \(O(d)/P\) by left multiplication, the orientational order is described by the associated bundle
\begin{equation}
	\mathcal Y_P
	:=
	F_g(M)\times_{O(d)}\bigl(O(d)/P\bigr)
	\longrightarrow M,
	\label{eq:YP_bundle}
\end{equation}
where
\begin{equation}
	F_g(M)\times_{O(d)}\bigl(O(d)/P\bigr)
	:=
	\bigl(F_g(M)\times (O(d)/P)\bigr)\big/\sim,
	\qquad
	(f\cdot a,\,[b])\sim (f,\,[ab]),
	\quad a,b\in O(d).
	\label{eq:associated_equiv_relation}
\end{equation}
The fiber of \(\mathcal Y_P\) over \(x\in M\) is the space of possible orientations of the crystalline cell at \(x\), modulo the point group \(P\). 
Thus the orientational component of crystalline order is not an abstract map into \(O(d)/P\), but a section
\begin{equation}
	\sigma_P:M\longrightarrow \mathcal Y_P
	=
	F_g(M)\times_{O(d)}\bigl(O(d)/P\bigr).
	\label{eq:sigmaP}
\end{equation}
In the dynamical setting \(X=\mathbb R_t\times M\), the same object may be viewed either as a time-dependent family of sections of \(\mathcal Y_P\to M\), or equivalently as a section of the pullback bundle \(\mathrm{pr}_M^*\mathcal Y_P\to X\).

By the standard correspondence between reductions of structure group and sections of the associated quotient bundle, a section \(\sigma_P\) of \(\mathcal Y_P\to M\) is equivalent to a reduction of the \(O(d)\)-structure of \(F_g(M)\) to the subgroup \(P\). 
Equivalently, it determines a principal \(P\)-subbundle
\begin{equation}
	Q\subset F_g(M).
	\label{eq:Q_in_frame_bundle}
\end{equation}
Thus the orientational order may be described either by the section \(\sigma_P\) or by the reduced frame bundle \(Q\).

The global existence of such a reduction is a nontrivial topological condition. 
Concretely, it means that, after choosing crystalline frames, the transition functions of the orthonormal frame bundle can be taken to have values in the finite subgroup \(P\). 
This is not an automatic property of an arbitrary Riemannian manifold, and in the presence of defects the construction should be understood on the defect-free region. 
In this paper \(Q\) is treated as part of the fixed crystallographic background.

\medskip

Thus, in the crystalline case, the general apparatus of symmetry reduction remains useful, but the order parameter naturally separates into two components. 
Locally, these components may be written as
\begin{equation}
	u:U\to \mathbb R^d/\Pi,
	\qquad
	\sigma_P:M\to F_g(M)\times_{O(d)}\bigl(O(d)/P\bigr).
	\label{eq:two_components_order_parameter}
\end{equation}
Here \(u\) denotes a local representative of the translational component, while \(\sigma_P\) describes the orientational reduction. 
Schematically, the crystalline order parameter is therefore
\begin{equation}
	\Phi=(u,\sigma_P).
	\label{eq:order_parameter_pair_local}
\end{equation}

\begin{remark}
	\label{rem:scope_main_text}
	The full crystalline order parameter has two components,
	\[
	\Phi=(u,\sigma_P),
	\]
	where \(u\) represents translational order and \(\sigma_P\) represents orientational order. 
	In the rest of the paper the orientational component is fixed. 
	Equivalently, the principal \(P\)-bundle \(Q\subset F_g(M)\) is treated as part of the background data. 
	The remaining dynamical variable is the translational component, globally represented by a section
	\[
	\sigma:X\to Y_\Pi,
	\qquad
	Y_\Pi=Q\times_P T_\Pi.
	\]
	This section is the configuration field of the phonon sector.
\end{remark}

This is the decomposition used below. 
The orientational reduction \(Q\subset F_g(M)\) is fixed as background data, and the translational component is the field to be varied. 
The bundle \(Q\) supplies the twisting data for the global configuration space of this field.

\section{The Associated Bundle for a Discrete Structure Group}
\label{sec:associated_bundle}

From now on, the orientational sector is fixed. 
Equivalently, the principal \(P\)-bundle
\[
Q\subset F_g(M)
\]
is treated as part of the background data. 
Locally, the translational component can be represented by a displacement field with values in the fixed torus
\[
T_\Pi=\mathbb R^d/\Pi.
\]
Globally, this description is generally insufficient. 
The local representatives of the displacement field are defined relative to crystalline frames, and under a change of such a frame they are transformed by the point-group action of \(P\).

The section \(\sigma_P\) determines the principal \(P\)-subbundle
\[
Q\subset F_g(M),
\]
which records the allowed local crystalline frames. 
The point group acts on \(T_\Pi=\mathbb R^d/\Pi\): linearly in the symmorphic case and affinely in the nonsymmorphic case, as explained in Remark~\ref{rem:nonsymmorphic_affine_action}. 
Therefore the translational component is not, in general, a map into a single fixed torus. 
It is a section of the bundle associated to \(Q\).

\medskip

In the dynamical setting \(X=\mathbb R_t\times M\), the principal \(P\)-bundle over \(X\) is the pullback
\[
\mathrm{pr}_M^*Q\longrightarrow X,
\qquad
\mathrm{pr}_M:X=\mathbb R_t\times M\to M.
\]
For brevity, this pullback will again be denoted by \(Q\to X\). 
Let
\begin{equation}
	\rho:P\to \mathrm{Diff}(T_\Pi)
	\label{eq:rho_action}
\end{equation}
be the action of the point group on the translation torus. 
In the symmorphic case, \(\rho(p)[v]=[A_pv]\). 
In the nonsymmorphic case, \(\rho(p)[v]=[A_pv+a_p]\). 
The differential of \(\rho(p)\) is the linear map \(A_p\); hence vertical tangent directions are acted on only by the linear part.

\begin{definition}
	\label{def:associated_torus_bundle}
	The bundle associated to the principal \(P\)-bundle \(Q\to X\) with typical fiber \(T_\Pi\) is
	\begin{equation}
		Y_\Pi:=Q\times_P T_\Pi:=(Q\times T_\Pi)/P,
		\label{eq:Ypi}
	\end{equation}
	where
	\begin{equation}
		(q,\tau)\sim (qp,\rho(p^{-1})\tau),
		\qquad p\in P.
		\label{eq:Ypi_equiv}
	\end{equation}
	It is a fiber bundle over \(X\) with projection
	\begin{equation}
		\pi_Y:Y_\Pi\to X,
		\qquad
		\pi_Y([q,\tau])=\pi_Q(q).
		\label{eq:Ypi_projection}
	\end{equation}
\end{definition}

A configuration of the translational sector is a section of this bundle:
\[
\sigma:X\to Y_\Pi.
\]
In local trivializations it is represented by torus-valued displacement functions, but globally these local representatives are glued by the \(P\)-action.

Let \(\{U_\alpha\}\) be an open cover of \(X\) with local sections
\[
z_\alpha:U_\alpha\to Q.
\]
Each \(z_\alpha\) gives a trivialization
\begin{equation}
	\psi_\alpha:\pi_Y^{-1}(U_\alpha)\xrightarrow{\sim} U_\alpha\times T_\Pi
	\label{eq:local_trivialization_map}
\end{equation}
defined by
\begin{equation}
	\psi_\alpha([z_\alpha(x),\tau])=(x,\tau).
	\label{eq:local_trivialization_formula}
\end{equation}
On an overlap \(U_\alpha\cap U_\beta\), write
\[
z_\beta(x)=z_\alpha(x)g_{\alpha\beta}(x),
\qquad
g_{\alpha\beta}:U_\alpha\cap U_\beta\to P.
\]
Then the corresponding fiber coordinates are related by
\begin{equation}
	\tau_\beta
	=
	\rho(g_{\alpha\beta}(x)^{-1})\,\tau_\alpha .
	\label{eq:transition_functions_associated}
\end{equation}
Since \(P\) is discrete, the transition functions \(g_{\alpha\beta}\) are locally constant. Hence \(Y_\Pi\to X\) has locally constant transition functions.

A section of \(Y_\Pi\to X\) is a map
\begin{equation}
	\sigma:X\to Y_\Pi,
	\qquad
	\pi_Y\circ \sigma=\mathrm{id}_X.
	\label{eq:section_Ypi}
\end{equation}
In the local trivializations \eqref{eq:local_trivialization_map}, it is represented by maps
\[
\sigma_\alpha:U_\alpha\to T_\Pi
\]
satisfying the transition rule \eqref{eq:transition_functions_associated}. 
If \(\widetilde u_\alpha:U_\alpha\to \mathbb R^d\) is a local lift of \(\sigma_\alpha\), then on overlaps
\begin{equation}
	\widetilde u_\beta
	=
	A_{g_{\alpha\beta}}^{-1}\widetilde u_\alpha
	+
	a(g_{\alpha\beta}^{-1})
	+
	\lambda_{\alpha\beta},
	\qquad
	\lambda_{\alpha\beta}\in \Pi.
	\label{eq:local_lift_relation}
\end{equation}
Here \(A_{g_{\alpha\beta}}\) is the orthogonal part of the transition function, and \(a(g_{\alpha\beta}^{-1})\) is the affine shift appearing in the action of \(g_{\alpha\beta}^{-1}\) on \(T_\Pi\). 
In the symmorphic case this shift is zero. 
Thus local displacement lifts are glued by a point-group transformation, possibly by a nonsymmorphic shift, and by an arbitrary lattice vector.

\medskip

We next record the differential-geometric structure of \(Y_\Pi\). 
Since \(Y_\Pi\) is obtained as the quotient of \(Q\times T_\Pi\) by the action
\[
(q,\tau)\cdot p=(qp,\rho(p^{-1})\tau),
\]
its tangent bundle may be written as
\[
TY_\Pi\cong (TQ\times TT_\Pi)/P.
\]
Since \(T_\Pi=\mathbb R^d/\Pi\) is a torus, its tangent bundle is canonically trivial:
\[
TT_\Pi\cong T_\Pi\times \mathbb R^d.
\]
Consequently,
\begin{equation}
	TY_\Pi \cong (TQ\times T_\Pi\times \mathbb R^d)/P.
	\label{eq:TYpi_final}
\end{equation}

Consider the pullback of the tangent bundle of the base,
\begin{equation}
	\pi_Y^*TX:=Y_\Pi\times_X TX.
	\label{eq:pullback_TX}
\end{equation}
The differential of the projection \(\pi_Y\) defines a vector-bundle morphism over \(Y_\Pi\),
\begin{equation}
	T\pi_Y:TY_\Pi\longrightarrow \pi_Y^*TX,
	\qquad
	T\pi_Y(v_y)=(y,T_y\pi_Y(v_y)).
	\label{eq:TpiY}
\end{equation}
The vertical tangent bundle is its kernel:
\begin{equation}
	VY_\Pi:=\ker(T\pi_Y)\subset TY_\Pi.
	\label{eq:vertical}
\end{equation}
Since \(\pi_Y:Y_\Pi\to X\) is a fiber bundle, \(T\pi_Y\) is surjective. 
Thus one obtains the short exact sequence
\begin{equation}
	0\longrightarrow VY_\Pi\longrightarrow TY_\Pi
	\xrightarrow{\,T\pi_Y\,}\pi_Y^*TX\longrightarrow 0.
	\label{eq:exact}
\end{equation}

In the present case the vertical bundle admits a natural associated description:
\begin{equation}
	VY_\Pi \cong (Q\times TT_\Pi)/P.
	\label{eq:vertical_associated}
\end{equation}
Since \(TT_\Pi\) is canonically trivial for a torus, vertical directions are modeled by the vector space \(\mathbb R^d\), with the point-group action given by the differential of \(\rho\). 
In the nonsymmorphic case this differential removes the affine shifts; only the linear part \(A_p\) acts on vertical tangent vectors.

Vertical directions in \(VY_\Pi\) are tangent to the fibers \(T_\Pi\) and represent infinitesimal translational variations of the crystalline order. 
After linearization around a background section, these variations give the phonon degrees of freedom associated with broken continuous translations \cite{LeutwylerPhonons,WatanabeMurayama2013,HayataHidaka2014,AndersenBraunerHofmannVuorinen}.

The exact sequence \eqref{eq:exact} shows that a differential-geometric description of the phonon sector requires a horizontal subbundle, equivalently a splitting of this sequence. 
In the present case such a splitting is canonical: since the structure group \(P\) is discrete, the transition functions \eqref{eq:transition_functions_associated} are locally constant. 
The next subsection shows that this gives a canonical flat Ehresmann connection on \(\pi_Y:Y_\Pi\to X\).

\subsection{Horizontal Splitting and the Ehresmann Connection}
\label{subsec:ehresmann}

Consider the associated bundle
\[
\pi_Y:Y_\Pi\to X,
\qquad
Y_\Pi=Q\times_P T_\Pi,
\]
constructed above. 
For a general fiber bundle, the short exact sequence \eqref{eq:exact} has no canonical splitting. 
In the present case, however, the discreteness of \(P\) determines a natural horizontal distribution on \(Y_\Pi\to X\).

Let \(\{U_\alpha\}\) be an open cover of \(X\) such that \(Y_\Pi\) is trivialized over each \(U_\alpha\) by \eqref{eq:local_trivialization_map}. 
In such a trivialization one has the natural splitting
\begin{equation}
	T(U_\alpha\times T_\Pi)\cong TU_\alpha \oplus TT_\Pi.
	\label{eq:local_splitting}
\end{equation}
We define the local horizontal subbundle by
\begin{equation}
	H_\alpha:= TU_\alpha \oplus 0 \subset T(U_\alpha\times T_\Pi).
	\label{eq:Halpha}
\end{equation}

We now check that these subbundles are compatible on overlaps. 
The transition functions of \(Y_\Pi=Q\times_P T_\Pi\) are
\begin{equation}
	\Phi_{\alpha\beta}(x,\tau)
	=
	\bigl(x,\rho(g_{\alpha\beta}(x)^{-1})\tau\bigr),
	\label{eq:Phi_transition}
\end{equation}
where
\[
g_{\alpha\beta}:U_\alpha\cap U_\beta\to P
\]
are the transition functions of the principal \(P\)-bundle \(Q\to X\). 
Since \(P\) is discrete, \(g_{\alpha\beta}\) is locally constant. 
Thus, on each connected component of \(U_\alpha\cap U_\beta\),
\begin{equation}
	\Phi_{\alpha\beta}(x,\tau)
	=
	(x,\varphi_{\alpha\beta}(\tau)),
	\qquad
	\varphi_{\alpha\beta}:=\rho(g_{\alpha\beta}^{-1}),
	\label{eq:varphi_alphabeta}
\end{equation}
where \(\varphi_{\alpha\beta}\in \mathrm{Diff}(T_\Pi)\) is independent of \(x\).
Therefore the differential of the transition map is
\begin{equation}
	T\Phi_{\alpha\beta}(v,\xi)
	=
	\bigl(v,\,T_\tau\varphi_{\alpha\beta}(\xi)\bigr),
	\qquad
	(v,\xi)\in TU_\alpha\oplus TT_\Pi.
	\label{eq:TPhi}
\end{equation}
In particular,
\begin{equation}
	T\Phi_{\alpha\beta}(v,0)=(v,0).
	\label{eq:horizontal_invariance}
\end{equation}
Thus the subspaces \(TU_\alpha\oplus 0\) are preserved by the transition functions. 
Consequently, the local subbundles \(H_\alpha\) agree on overlaps and glue to a global subbundle
\begin{equation}
	HY_\Pi \subset TY_\Pi.
	\label{eq:horizontal}
\end{equation}

Locally, under the trivialization 
\(\pi_Y^{-1}(U_\alpha)\simeq U_\alpha\times T_\Pi\), the vertical bundle is
\begin{equation}
	VY_\Pi|_{\pi_Y^{-1}(U_\alpha)}
	\cong
	0\oplus TT_\Pi.
	\label{eq:vertical_local}
\end{equation}
Thus, in each trivialization,
\begin{equation}
	T(U_\alpha\times T_\Pi)
	\cong
	(TU_\alpha\oplus 0)\oplus (0\oplus TT_\Pi).
	\label{eq:local_direct_sum}
\end{equation}
After gluing, this gives the global decomposition
\begin{equation}
	TY_\Pi = HY_\Pi \oplus VY_\Pi.
	\label{eq:splitting}
\end{equation}

\begin{proposition}[Canonical flat connection]
	\label{prop:ehresmann}
	Let
	\[
	Y_\Pi=Q\times_P T_\Pi
	\]
	be the bundle associated to a principal \(P\)-bundle \(Q\to X\), where \(P\) is discrete. 
	Then the local horizontal subbundles \eqref{eq:Halpha} agree on overlaps and define a global subbundle
	\[
	HY_\Pi\subset TY_\Pi
	\]
	such that
	\[
	TY_\Pi=HY_\Pi\oplus VY_\Pi.
	\]
	Moreover, \(HY_\Pi\) is integrable. 
	Thus, relative to the fixed \(P\)-bundle \(Q\), it defines a canonical flat Ehresmann connection on \(\pi_Y:Y_\Pi\to X\).
\end{proposition}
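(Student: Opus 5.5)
The plan follows the local-to-global pattern already set up before the statement, in three parts: gluing of the horizontal subbundles, the direct-sum decomposition, and integrability.

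\textbf{Gluing.} Fix an open cover \(\{U_\alpha\}\) with local sections \(z_\alpha\) of \(Q\) and the trivializations \eqref{eq:local_trivialization_map}. On each connected component of \(U_\alpha\cap U_\beta\) the transition function \(g_{\alpha\beta}\) is constant, because it is a continuous map into the discrete group \(P\). Hence \(\Phi_{\alpha\beta}\) has the product form \eqref{eq:varphi_alphabeta}, and its differential is \eqref{eq:TPhi}. In particular \(T\Phi_{\alpha\beta}\) carries \(TU_\alpha\oplus 0\) bijectively onto \(TU_\beta\oplus 0\) at corresponding points, by \eqref{eq:horizontal_invariance}. Therefore \(\psi_\alpha^{-1}(H_\alpha)\) and \(\psi_\beta^{-1}(H_\beta)\) coincide on \(\pi_Y^{-1}(U_\alpha\cap U_\beta)\). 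Since the \(H_\alpha\) are smooth rank-\(\dim X\) subbundles, they glue to a smooth subbundle \(HY_\Pi\). The resulting subbundle does not depend on the chosen cover or local sections. Any other local section \(z'_\alpha\) of \(Q\) differs from \(z_\alpha\) by a locally constant \(P\)-valued function, so the same argument applies to the change of trivialization. Consequently \(HY_\Pi\) depends only on \(Q\), which justifies the word ``canonical''.

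\textbf{Direct sum.} In each chart the vertical bundle is \(0\oplus TT_\Pi\) by \eqref{eq:vertical_local}, and \(H_\alpha\cap VY_\Pi=0\). Counting ranks, \(\dim X+d=\dim Y_\Pi\), so the decomposition \eqref{eq:local_direct_sum} holds pointwise. Being a pointwise statement, it globalizes at once to \eqref{eq:splitting}. Equivalently, \(T\pi_Y\) restricted to \(HY_\Pi\) is an isomorphism onto \(\pi_Y^*TX\), which splits \eqref{eq:exact}. This is exactly the definition of an Ehresmann connection.

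\textbf{Integrability and flatness.} Integrability is local, so it suffices to check it in one chart. There \(H_\alpha\) is tangent to the slices \(U_\alpha\times\{\tau\}\), so it is spanned by coordinate vector fields \(\partial/\partial x^\mu\), which commute. Involutivity follows, and by Frobenius the slices are the integral leaves. Flatness of an Ehresmann connection means that its curvature vanishes. The curvature is the vertical part of \([\xi,\eta]\) for horizontal vector fields \(\xi,\eta\), so vanishing curvature is equivalent to involutivity, and the connection is flat. Alternatively, local horizontal sections are the constant maps \(\tau_\alpha=\mathrm{const}\), and these are preserved by the constant transition maps \(\varphi_{\alpha\beta}\).

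The only genuinely delicate point is the first step. One must make sure that local constancy is applied on connected components of the overlaps, and that the affine (nonsymmorphic) shifts \(a_p\) cause no trouble. They do not, because \(\varphi_{\alpha\beta}\) is still an \(x\)-independent diffeomorphism of the fiber, so the shifts only affect the vertical component through \(T\varphi_{\alpha\beta}=A_{g_{\alpha\beta}^{-1}}\).
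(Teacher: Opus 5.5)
Your proposal is correct and takes essentially the same route as the paper. The transition maps are identity on the base and locally constant on the fiber, so they preserve $TU_\alpha\oplus 0$; the local horizontal subbundles therefore glue and give the splitting, and integrability follows chart by chart from $[(X,0),(Y,0)]=([X,Y],0)$. Your extra remarks on independence of the choice of local sections and on curvature as the vertical part of the bracket of horizontal fields only make explicit what the paper leaves to the surrounding discussion.
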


\begin{proof}
	In a local trivialization \(Y_\Pi|_{U_\alpha}\simeq U_\alpha\times T_\Pi\), set \(H_\alpha=TU_\alpha\oplus 0\). 
	The transition functions \(\Phi_{\alpha\beta}\) act as the identity on the base coordinates and by locally constant diffeomorphisms \(\varphi_{\alpha\beta}\) on the fiber coordinates. 
	Hence their differentials preserve the subspaces \(TU_\alpha\oplus 0\). 
	Thus the subbundles \(H_\alpha\) glue to a global subbundle \(HY_\Pi\subset TY_\Pi\). 
	The local decomposition \eqref{eq:local_direct_sum} then gives the global decomposition \eqref{eq:splitting}.
	
	To verify integrability, observe that in each local trivialization the horizontal vector fields have the form \((X,0)\), where \(X\) is a vector field on \(U_\alpha\). 
	For two such fields,
	\[
	[(X,0),(Y,0)] = ([X,Y],0).
	\]
	Thus \(H_\alpha\) is involutive in every trivialization, and the glued distribution \(HY_\Pi\) is integrable. 
	Therefore \(HY_\Pi\) defines a flat Ehresmann connection on \(\pi_Y:Y_\Pi\to X\).
\end{proof}

The word ``canonical'' is used relative to the fixed \(P\)-bundle \(Q\). 
It should not be read as a canonical choice before the crystallographic background \(Q\) has been specified, nor as a choice canonical simultaneously for all possible \(P\)-reductions. 
In admissible local trivializations the transition functions are the identity on the base and are locally constant in the fiber variable. 
Hence the subspaces \(TU_\alpha\oplus 0\) glue without any additional choice. 
This flat Ehresmann connection is associated with the discrete \(P\)-bundle \(Q\); it is not the Levi--Civita connection of the Riemannian frame bundle \(F_g(M)\). 
Relating it to the Levi--Civita connection would require additional assumptions ensuring that Levi--Civita parallel transport preserves the \(P\)-reduction \(Q\).

Restricting \(T\pi_Y\) to the horizontal subbundle gives an isomorphism
\begin{equation}
	T\pi_Y|_{HY_\Pi}:HY_\Pi\xrightarrow{\sim}\pi_Y^*TX,
	\label{eq:TpiY_horizontal_iso}
\end{equation}
which is equivalent to a splitting of the short exact sequence \eqref{eq:exact}.
We now rewrite this connection in the jet-bundle language used for first-order Lagrangians. 
In this language, an Ehresmann connection on \(Y_\Pi\to X\) is equivalently a global section of
\[
J^1Y_\Pi\to Y_\Pi .
\]
This is the form used below to define the covariant differential and to write Lagrangians depending on first derivatives.

\subsection{The Covariant Differential in the Jet Formulation}
\label{subsec:jet_covariant}

The horizontal distribution constructed in Sec.~\ref{subsec:ehresmann},
\[
HY_\Pi \subset TY_\Pi,
\]
defines a canonical flat Ehresmann connection on \(\pi_Y:Y_\Pi\to X\). 
We now rewrite this connection in the jet-bundle language used for first-order Lagrangians. 
In this language, a connection on a bundle \(Y\to X\) is a global section of the affine bundle of first jets
\[
J^1Y \to Y;
\]
see, for example, \cite{SardanashvilyBook,GiachettaBook}.

Let
\[
\pi_Y:Y_\Pi\to X
\]
be the configuration bundle, and denote its first jet bundle by \(J^1Y_\Pi\). 
For \(y\in Y_\Pi\) with \(x=\pi_Y(y)\), the fiber \((J^1Y_\Pi)_y\) consists of \(1\)-jets of local sections \(\sigma\) satisfying \(\sigma(x)=y\). 
Thus \(J^1Y_\Pi\) records both the value of the field and its first derivatives.

The bundle
\[
J^1Y_\Pi \to Y_\Pi
\]
is an affine bundle modeled on the vector bundle
\begin{equation}
	\pi_Y^*T^*X \otimes_{Y_\Pi} VY_\Pi \to Y_\Pi.
	\label{eq:jet_affine_model}
\end{equation}
In other words, the difference of two \(1\)-jets over the same point \(y\in Y_\Pi\) is a vertical-valued covector.

\medskip

In this terminology, a connection on \(Y_\Pi\to X\) is a global section
\begin{equation}
	\Gamma_Y: Y_\Pi \to J^1Y_\Pi.
	\label{eq:connection_jet}
\end{equation}
Equivalently, it is a horizontal distribution \(HY_\Pi\subset TY_\Pi\), or a splitting of the short exact sequence \eqref{eq:exact}.

In a local trivialization
\[
Y_\Pi|_{U_\alpha}\simeq U_\alpha\times T_\Pi
\]
choose coordinates \((x^\lambda,\tau^i)\), where \(x^\lambda\) are coordinates on \(U_\alpha\subset X\) and \(\tau^i\) are local coordinates on \(T_\Pi\). 
The first jet bundle \(J^1Y_\Pi\) then carries induced coordinates
\[
(x^\lambda,\tau^i,\tau^i_\lambda),
\]
where \(\tau^i_\lambda\) denote the jet coordinates corresponding to the first derivatives \(\partial_\lambda\tau^i\).

In these coordinates the connection \eqref{eq:connection_jet} is written as
\begin{equation}
	\Gamma_Y
	=
	(x^\lambda,\tau^i,\Gamma^i_\lambda(x,\tau)).
	\label{eq:connection_coordinates}
\end{equation}
The corresponding covariant differential is the first-order operator
\begin{equation}
	D_{\Gamma_Y}: J^1Y_\Pi \to \pi_Y^*T^*X \otimes VY_\Pi,
	\label{eq:DGamma}
\end{equation}
given locally by
\begin{equation}
	D_{\Gamma_Y}
	=
	(\tau^i_\lambda-\Gamma^i_\lambda)\,
	dx^\lambda\otimes \partial_i.
	\label{eq:DGamma_formula}
\end{equation}
It measures the difference between an arbitrary \(1\)-jet and the \(1\)-jet selected by the connection.

\medskip

Here \(u\) denotes a local physical displacement representative, while \(\sigma\) denotes a global section of \(Y_\Pi\to X\). 
Let
\[
\sigma:X\to Y_\Pi
\]
be such a section. 
Its first jet
\begin{equation}
	J^1\sigma:X\to J^1Y_\Pi
	\label{eq:first_jet_sigma}
\end{equation}
defines the covariant differential
\begin{equation}
	\nabla^\Gamma\sigma
	:=
	D_\Gamma\circ J^1\sigma.
	\label{eq:covariant}
\end{equation}
In local coordinates this is
\begin{equation}
	\nabla^\Gamma\sigma
	=
	(\partial_\lambda \sigma^i
	-
	\Gamma^i_\lambda\circ \sigma)\,
	dx^\lambda\otimes \partial_i.
	\label{eq:covariant_local}
\end{equation}
Thus \(\nabla^\Gamma\sigma\) is the global counterpart of the local displacement gradient; it is defined independently of a chosen trivialization of \(Y_\Pi\).

\medskip

The case of principal interest is the canonical flat Ehresmann connection constructed in Sec.~\ref{subsec:ehresmann}. 
Since the transition functions of \(Y_\Pi=Q\times_P T_\Pi\) are locally constant, the horizontal distribution in the local trivializations \(U_\alpha\times T_\Pi\) is
\[
H_\alpha = TU_\alpha\oplus 0.
\]
The corresponding jet connection \(\Gamma_0:Y_\Pi\to J^1Y_\Pi\) has the local form
\begin{equation}
	\Gamma_0
	=
	(x^\lambda,\tau^i,0),
	\label{eq:flat_connection_coordinates}
\end{equation}
or equivalently
\[
\Gamma_{0\,\lambda}^i=0.
\]
This expression is compatible on overlaps because the fiber transition functions are locally constant in \(x\). Hence
\begin{equation}
	D_{\Gamma_0}
	=
	\tau^i_\lambda\,dx^\lambda\otimes \partial_i,
	\qquad
	\nabla^{\Gamma_0}\sigma
	=
	\partial_\lambda \sigma^i\,
	dx^\lambda\otimes \partial_i.
	\label{eq:flat_covariant_differential}
\end{equation}
Thus, for the canonical flat connection, the covariant differential is locally the ordinary first derivative with respect to the base coordinates. 
Its significance is that this local expression is induced by a global connection on \(Y_\Pi\to X\), and therefore transforms correctly on overlaps.

\medskip

Since \(\sigma\) is a section of \(Y_\Pi\to X\), the covariant differential \(\nabla^\Gamma\sigma\) is a vertical-valued one-form along \(\sigma\). 
Set
\begin{equation}
	\sigma^*VY_\Pi
	:=
	X\times_{Y_\Pi}VY_\Pi.
	\label{eq:pullback_vertical}
\end{equation}
This vector bundle over \(X\) has fibers
\begin{equation}
	(\sigma^*VY_\Pi)_x
	=
	V_{\sigma(x)}Y_\Pi.
	\label{eq:pullback_vertical_fibers}
\end{equation}
Thus
\begin{equation}
	\nabla^\Gamma\sigma
	\in
	\Gamma\bigl(X,\,T^*X\otimes \sigma^*VY_\Pi\bigr).
	\label{eq:covariant_section}
\end{equation}

For the associated bundle
\[
Y_\Pi = Q\times_P T_\Pi,
\]
the vertical bundle has the form
\[
VY_\Pi \cong (Q\times TT_\Pi)/P.
\]
Since \(T_\Pi=\mathbb R^d/\Pi\) is a torus, its tangent bundle is canonically trivial:
\[
TT_\Pi \cong T_\Pi\times \mathbb R^d.
\]
The differential of the \(P\)-action on \(T_\Pi\) acts on the factor \(\mathbb R^d\) by the linear part \(A_p\). 
Hence, after pulling back along \(\sigma\), one obtains the natural isomorphism
\begin{equation}
	\sigma^*VY_\Pi \cong Q\times_P \mathbb R^d.
	\label{eq:sigma_pullback_vertical_trivialized}
\end{equation}
It follows that the covariant differential can be regarded as a section
\begin{equation}
	\nabla^\Gamma\sigma
	\in
	\Gamma\bigl(X,\,T^*X\otimes (Q\times_P \mathbb R^d)\bigr).
	\label{eq:covariant_target_bundle}
\end{equation}
Thus \(\nabla^\Gamma\sigma\) is globally a one-form with values in the associated vector bundle \(Q\times_P\mathbb R^d\); in local trivializations it is represented by an \(\mathbb R^d\)-valued one-form.

\begin{proposition}[Global translational order parameter and covariant differential]
	\label{prop:global_phonon_covariant_differential}
	Let \(Q\to X\) be a fixed principal \(P\)-bundle with \(P\) discrete, and let \(P\) act on \(T_\Pi\) linearly in the symmorphic case or affinely as in \eqref{eq:affine_torus_action} in the nonsymmorphic case. 
	Then the translational order-parameter field on this background is a section
	\[
	\sigma\in\Gamma(X,Y_\Pi),
	\qquad
	Y_\Pi=Q\times_P T_\Pi.
	\]
	Relative to the fixed bundle \(Q\), the bundle \(Y_\Pi\to X\) carries the canonical flat Ehresmann connection \(\Gamma_0\) constructed in Proposition~\ref{prop:ehresmann}. 
	For every section \(\sigma\), its covariant differential is globally defined as
	\begin{equation}
		\nabla^{\Gamma_0}\sigma
		\in
		\Gamma\bigl(X,T^*X\otimes(Q\times_P\mathbb R^d)\bigr).
		\label{eq:main_global_covariant_differential}
	\end{equation}
	In any admissible local trivialization it is represented by the ordinary differential of a local lift, and on overlaps these representatives transform by the linear point-group action.
\end{proposition}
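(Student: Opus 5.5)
The plan is to assemble the proposition from three ingredients already in place: Definition~\ref{def:associated_torus_bundle}, Proposition~\ref{prop:ehresmann}, and the identification \eqref{eq:sigma_pullback_vertical_trivialized}. The first sentence, that the translational field is a section of \(Y_\Pi\), is essentially definitional, by Remark~\ref{rem:scope_main_text}. I would justify it by checking that the local torus-valued representatives \(\sigma_\alpha\) glued by \eqref{eq:transition_functions_associated} are exactly the data of a section. For this, Proposition~\ref{prop:affine_point_action}(iv) guarantees that \(\rho\) is a genuine left action, so the quotient in \eqref{eq:Ypi} is a well-defined fiber bundle. Existence of \(\Gamma_0\) is then quoted directly from Proposition~\ref{prop:ehresmann}, together with its jet form \eqref{eq:flat_connection_coordinates}.

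The substantive step is to show that \(\nabla^{\Gamma_0}\sigma=D_{\Gamma_0}\circ J^1\sigma\) is well defined and lands in \(T^*X\otimes(Q\times_P\mathbb R^d)\). I would work on an overlap \(U_\alpha\cap U_\beta\), restricted to a connected component where \(g_{\alpha\beta}\equiv g\) is constant. There I would take local lifts \(\widetilde u_\alpha,\widetilde u_\beta\) related by \eqref{eq:local_lift_relation}. Since \(a(g^{-1})\) and \(\lambda_{\alpha\beta}\) are constant on the component (\(\lambda_{\alpha\beta}\) is continuous and \(\Pi\)-valued, hence locally constant), differentiating gives
\[
d\widetilde u_\beta=A_g^{-1}\,d\widetilde u_\alpha .
\]
Two facts need to be checked here. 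First, \(d\widetilde u_\alpha\) is independent of the choice of lift, since lifts differ by locally constant lattice vectors; so it equals the coordinate expression \(\partial_\lambda\sigma^i\,dx^\lambda\otimes\partial_i\) from \eqref{eq:flat_covariant_differential}. Second, the rule \(d\widetilde u_\beta=A_g^{-1}d\widetilde u_\alpha\) is precisely the cocycle rule for the vector bundle \(Q\times_P\mathbb R^d\) with fiber action \(A\), whose local frames are induced by the same sections \(z_\alpha\). Hence the local \(\mathbb R^d\)-valued one-forms glue to a global section of \(T^*X\otimes(Q\times_P\mathbb R^d)\). This also proves the final sentence of the statement: the representatives are ordinary differentials of lifts, and on overlaps they transform by the linear action alone, with the affine shifts disappearing.

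Finally, I would reconcile this with the intrinsic description. The map \(\sigma^*VY_\Pi\cong Q\times_P\mathbb R^d\) from \eqref{eq:sigma_pullback_vertical_trivialized} sends \([z_\alpha(x),(\tau,\xi)]\mapsto[z_\alpha(x),\xi]\), using \(TT_\Pi\cong T_\Pi\times\mathbb R^d\). It is well defined because \(T\rho(p)=A_p\) on the \(\mathbb R^d\) factor. Under this map, the vertical-valued form \(D_{\Gamma_0}\circ J^1\sigma\) corresponds exactly to the glued local forms \(d\widetilde u_\alpha\). The main point needing care is this identification of vertical vectors: the differential of the affine map \([v]\mapsto[A_pv+a_p]\) is \(A_p\) in the canonical trivialization of \(TT_\Pi\), independent of the base point \(\tau\). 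Once that is checked, the rest is bookkeeping.
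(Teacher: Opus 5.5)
Your proposal is correct and follows the paper's own route. The paper quotes the definition of \(Y_\Pi\) and Proposition~\ref{prop:ehresmann}, reads off \(d\widetilde u_\alpha\) as the local representative from \eqref{eq:flat_covariant_differential}, and differentiates \eqref{eq:local_lift_relation} on overlaps to obtain \(d\widetilde u_\beta=A_{g_{\alpha\beta}}^{-1}d\widetilde u_\alpha\), the cocycle of \(Q\times_P\mathbb R^d\). Your extra checks, namely independence of the choice of lift and the explicit identification \(\sigma^*VY_\Pi\cong Q\times_P\mathbb R^d\) using that the affine map has differential \(A_p\), are details the paper handles in the text before the proposition rather than in the proof.
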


\begin{proof}
	The first statement is the definition of the associated bundle \eqref{eq:Ypi}. 
	The existence of \(\Gamma_0\) follows from Proposition~\ref{prop:ehresmann}. 
	In a trivialization compatible with \(Q\), formula \eqref{eq:flat_covariant_differential} shows that \(\nabla^{\Gamma_0}\sigma\) is represented by \(d\widetilde u_\alpha\), where \(\widetilde u_\alpha\) is a local lift of the torus-valued representative \(\sigma_\alpha\).
	
	On an overlap, the local lifts satisfy
	\[
	\widetilde u_\beta
	=
	A_{g_{\alpha\beta}}^{-1}\widetilde u_\alpha
	+
	a(g_{\alpha\beta}^{-1})
	+
	\lambda_{\alpha\beta},
	\qquad
	\lambda_{\alpha\beta}\in\Pi.
	\]
	The last two terms are locally constant. Hence
	\[
	d\widetilde u_\beta
	=
	A_{g_{\alpha\beta}}^{-1}d\widetilde u_\alpha.
	\]
	These are exactly the transition functions of the associated vector bundle \(Q\times_P\mathbb R^d\). 
	Therefore the local forms \(d\widetilde u_\alpha\) glue to a global section of
	\[
	T^*X\otimes(Q\times_P\mathbb R^d),
	\]
	which proves \eqref{eq:main_global_covariant_differential}.
\end{proof}

\subsection{A minimal twisted example}
\label{subsec:twisted_example}

The following elementary example illustrates why the global bundle language is not merely a change of notation. 
Let \(X=S^1\), let \(\Pi=\mathbb Z\), and let \(T_\Pi=\mathbb R/\mathbb Z\). 
Take \(P=\mathbb Z_2=\{1,r\}\), acting on \(T_\Pi\) by
\[
\rho(r)[u]=[-u].
\]
Let \(Q\to S^1\) be the nontrivial principal \(\mathbb Z_2\)-bundle, equivalently the double cover of the circle. 
Then
\[
Y_\Pi=Q\times_{\mathbb Z_2}(\mathbb R/\mathbb Z)\to S^1
\]
is a twisted torus bundle. 
On two local trivializations whose transition function is \(r\), local representatives of a section satisfy
\[
u_\beta=-u_\alpha\quad \mathrm{mod}\ \mathbb Z.
\]
After choosing local lifts \(\widetilde u_\alpha,\widetilde u_\beta\) to \(\mathbb R\), this becomes
\[
\widetilde u_\beta=-\widetilde u_\alpha+n_{\alpha\beta},
\qquad n_{\alpha\beta}\in\mathbb Z.
\]
Therefore
\[
d\widetilde u_\beta=-d\widetilde u_\alpha.
\]
The local derivatives do not glue to an ordinary global one-form, but they do glue to a section of
\[
T^*S^1\otimes L,
\]
where \(L=Q\times_{\mathbb Z_2}\mathbb R\) is the associated M\"obius line bundle. 
Thus even in this minimal model there need not be a global real-valued displacement field, while the covariant differential \(\nabla^{\Gamma_0}\sigma\) is globally well defined. 
A covariantly constant background section exists at the fixed points \([0]\) and \([1/2]\) of the involution; the corresponding linear phonon field is then a section of \(L\), not a function on \(S^1\).

The jet formulation expresses the bundle geometry in terms of first derivatives. 
The canonical flat connection \(\Gamma_0\) provides the background covariant differential, and \(\nabla^{\Gamma_0}\sigma\) is the global counterpart of the local displacement gradient. 
This is the geometric quantity that enters the first-order Lagrangian for the translational sector.

\subsection{Torus-Valued Field, Local Lift, and Linear Fluctuation}
\label{subsec:toric_lift_linearization}

Before writing the Lagrangian, it is useful to distinguish three objects that are often denoted by the same symbol \(u\) in local physics notation.

First, the global translational order parameter is a section
\[
\sigma\in \Gamma(X,Y_\Pi),
\qquad
Y_\Pi=Q\times_P T_\Pi.
\]
In general, \(\sigma\) is not a map \(X\to\mathbb R^d\), nor is it canonically a map into a fixed torus \(T_\Pi\) without choosing a local trivialization. 
It is this global section to which jets and covariant differentials are applied.
Second, after choosing a local section \(z_\alpha:U_\alpha\to Q\), the global section \(\sigma\) is represented by a torus-valued map
\[
\sigma_\alpha:U_\alpha\to T_\Pi.
\]
On overlaps these representatives satisfy
\[
\sigma_\beta
=
\rho(g_{\alpha\beta}^{-1})\sigma_\alpha.
\]
If local lifts
\[
\widetilde u_\alpha:U_\alpha\to\mathbb R^d,
\qquad
[\widetilde u_\alpha]=\sigma_\alpha,
\]
are chosen, then
\[
\widetilde u_\beta
=
A_{g_{\alpha\beta}}^{-1}\widetilde u_\alpha
+
a(g_{\alpha\beta}^{-1})
+
\lambda_{\alpha\beta},
\qquad
\lambda_{\alpha\beta}\in\Pi.
\]
On each connected component of the overlap, the last two terms are locally constant. Hence the first derivatives of the local lifts transform without the affine shift:
\begin{equation}
	d\widetilde u_\beta
	=
	A_{g_{\alpha\beta}}^{-1}d\widetilde u_\alpha.
	\label{eq:local_lift_derivative_transform}
\end{equation}
This explains why \(\nabla^{\Gamma_0}\sigma\) is globally defined as a section of
\[
T^*X\otimes(Q\times_P\mathbb R^d),
\]
even when the local lifts \(\widetilde u_\alpha\) do not glue to a single \(\mathbb R^d\)-valued field on \(X\).

Third, consider a small fluctuation around a chosen equilibrium section. 
Let \(\sigma_0:X\to Y_\Pi\) be a background section. 
The linear phonon field is then not a new torus-valued order parameter, but a section of the vertical bundle along the background:
\begin{equation}
	\varphi\in\Gamma(X,\sigma_0^*VY_\Pi)
	\simeq
	\Gamma(X,E_\Pi),
	\qquad
	E_\Pi:=Q\times_P\mathbb R^d.
	\label{eq:linear_fluctuation_section}
\end{equation}
In a local trivialization, if \(\widetilde u_{0,\alpha}\) is a local lift of the background, a small configuration may be written as
\begin{equation}
	\widetilde u_\alpha
	=
	\widetilde u_{0,\alpha}+\varphi_\alpha
	\quad
	\mathrm{mod}\ \Pi,
	\label{eq:linearization_torus_exp}
\end{equation}
where the local functions \(\varphi_\alpha:U_\alpha\to\mathbb R^d\) glue linearly:
\[
\varphi_\beta
=
A_{g_{\alpha\beta}}^{-1}\varphi_\alpha.
\]
The affine shifts enter the gluing of the torus-valued representatives themselves, but not the gluing of tangent fluctuations. 
Thus \(\varphi\) is the linear displacement field; in local trivializations it is represented by the ordinary vector-valued displacement field of linear elasticity.

If the background \(\sigma_0\) is covariantly constant with respect to \(\Gamma_0\), then locally
\[
\nabla^{\Gamma_0}\sigma
=
d\varphi_\alpha .
\]
By \eqref{eq:local_lift_derivative_transform}, these local expressions glue to a global section
\begin{equation}
	\nabla^{\Gamma_0}\sigma
	\in
	\Gamma(X,T^*X\otimes E_\Pi).
	\label{eq:flat_cov_diff_lift}
\end{equation}
It is from this linear object that the small-strain tensor, the quadratic elastic energy, and the acoustic phonon spectrum are constructed. 
This agrees with the effective description of phonons as Goldstone modes of broken translations \cite{LeutwylerPhonons,WatanabeMurayama2013,BraunerBook,HayataHidaka2014}, while the absence of independent rotational phonons corresponds to the inverse Higgs mechanism \cite{IvanovOgievetsky1975,LowManohar,BraunerInverseHiggs}.

\section{First-Order Lagrangian Formulation}
\label{sec:first_order_lagrangian}

We now pass from the description of the configuration bundle to the Lagrangian formulation. 
Configurations are sections of
\[
\pi_Y:Y_\Pi\to X.
\]
Therefore a Lagrangian depending on first derivatives is naturally defined on the first jet bundle
\[
J^1Y_\Pi\to X.
\]
Let \(\dim X=n\). 
A first-order Lagrangian on the bundle \(Y_\Pi\to X\) is a fibered morphism
\begin{equation}
	L:J^1Y_\Pi\to \Lambda^nT^*X,
	\label{eq:first_order_lagrangian}
\end{equation}
where \(\Lambda^nT^*X\to X\) is the bundle of \(n\)-forms, interpreted here as Lagrangian densities. 
In local coordinates \((x^\lambda,\tau^i,\tau^i_\lambda)\) on \(J^1Y_\Pi\), induced by a local trivialization of \(Y_\Pi\to X\), the Lagrangian has the form
\begin{equation}
	L=\mathcal L(x^\lambda,\tau^i,\tau^i_\lambda)\,\omega,
	\label{eq:first_order_lagrangian_density}
\end{equation}
where \(\omega=dx^1\wedge\cdots\wedge dx^n\) is the local coordinate volume form on \(X\). 
This is the standard first-order Lagrangian formalism on fiber bundles \cite{SardanashvilyBook,GiachettaBook}.

If \(\sigma:X\to Y_\Pi\) is a section of the configuration bundle, its first jet
\[
J^1\sigma:X\to J^1Y_\Pi
\]
pulls the Lagrangian back to an \(n\)-form \((J^1\sigma)^*L\) on \(X\). 
In local coordinates, if \(\sigma\) is represented by functions \(\tau^i=\sigma^i(x)\), then
\[
(J^1\sigma)^*L
=
\mathcal L\bigl(x^\lambda,\sigma^i(x),\partial_\lambda\sigma^i(x)\bigr)\,\omega.
\]
For a compact domain \(D\subset X\), the corresponding action functional is
\begin{equation}
	S_D[\sigma]=\int_D (J^1\sigma)^*L.
	\label{eq:action_functional}
\end{equation}
Thus a first-order Lagrangian assigns an action to a configuration through its first jet.

\medskip

The Lagrangian \eqref{eq:first_order_lagrangian} has an associated Euler--Lagrange operator
\begin{equation}
	\mathcal E_L:J^2Y_\Pi\to V^*Y_\Pi\otimes \Lambda^nT^*X,
	\label{eq:EL_operator}
\end{equation}
which is locally written as
\begin{equation}
	\mathcal E_L
	=
	\left(
	\frac{\partial \mathcal L}{\partial \tau^i}
	-
	d_\lambda\frac{\partial \mathcal L}{\partial \tau^i_\lambda}
	\right)
	\theta^i\otimes \omega.
	\label{eq:EL_operator_local}
\end{equation}
Here \(d_\lambda\) denotes the total derivative, and
\[
\theta^i=d\tau^i-\tau^i_\lambda dx^\lambda
\]
are the contact \(1\)-forms. The corresponding Euler--Lagrange equations are
\begin{equation}
	\frac{\partial \mathcal L}{\partial \tau^i}
	-
	d_\lambda\frac{\partial \mathcal L}{\partial \tau^i_\lambda}
	=0.
	\label{eq:EL_equations}
\end{equation}
Thus a first-order Lagrangian on \(J^1Y_\Pi\) gives second-order field equations for sections of \(Y_\Pi\to X\).

\medskip

In general, the Lagrangian \eqref{eq:first_order_lagrangian} is an arbitrary fibered morphism on \(J^1Y_\Pi\). 
In the present construction, however, \(Y_\Pi\to X\) carries a connection
\[
\Gamma:Y_\Pi\to J^1Y_\Pi,
\]
and hence a covariant differential
\[
D_\Gamma:J^1Y_\Pi\to \pi_Y^*T^*X\otimes VY_\Pi,
\]
see \eqref{eq:DGamma}. 
It is therefore natural to consider first-order Lagrangians that depend on the first jet only through \(D_\Gamma\). 
Equivalently, we take Lagrangians of the form
\begin{equation}
	L=\widetilde L\circ D_\Gamma,
	\label{eq:lagrangian_factorization}
\end{equation}
where
\begin{equation}
	\widetilde L:
	\pi_Y^*T^*X\otimes VY_\Pi \longrightarrow \Lambda^nT^*X
	\label{eq:Ltilde_map}
\end{equation}
is a fibered morphism over \(X\). 
For the general jet-bundle formalism, see \cite{SaundersJetBundles,GiachettaBook}.

If \(\sigma:X\to Y_\Pi\) is a section, then the factorization \eqref{eq:lagrangian_factorization} means that the Lagrangian depends on the first jet \(J^1\sigma\) only through the covariant differential
\[
\nabla^\Gamma\sigma
=
D_\Gamma\circ J^1\sigma
\in
\Gamma\!\left(X,\,T^*X\otimes \sigma^*VY_\Pi\right),
\]
see \eqref{eq:covariant}. 
In local coordinates this is the dependence on the combinations
\[
\xi^i_\lambda
:=
\tau^i_\lambda-\Gamma^i_\lambda(x,\tau).
\]
Accordingly, the local Lagrangian density has the form
\begin{equation}
	L=
	\widetilde{\mathcal L}(x^\lambda,\tau^i,\xi^i_\lambda)\,\omega,
	\qquad
	\xi^i_\lambda=\tau^i_\lambda-\Gamma^i_\lambda(x,\tau).
	\label{eq:factorized_lagrangian_density}
\end{equation}

For the translational sector, this is the natural class of first-order Lagrangians: \(\nabla^\Gamma\sigma\) is the global replacement for the local displacement gradient.

\medskip

For a Lagrangian of the form \eqref{eq:factorized_lagrangian_density}, the Euler--Lagrange equations \eqref{eq:EL_equations} become
\begin{equation}
	\frac{\partial \widetilde{\mathcal L}}{\partial \tau^i}
	-
	\frac{\partial \widetilde{\mathcal L}}{\partial \xi^j_\lambda}
	\frac{\partial \Gamma^j_\lambda}{\partial \tau^i}
	-
	d_\lambda\!\left(
	\frac{\partial \widetilde{\mathcal L}}{\partial \xi^i_\lambda}
	\right)
	=0.
	\label{eq:EL_factorized}
\end{equation}
For the canonical flat connection \(\Gamma_0\) constructed in Sec.~\ref{subsec:ehresmann}, one has \(\Gamma^i_{0\,\lambda}=0\) in admissible local trivializations, and therefore
\[
\xi^i_\lambda=\tau^i_\lambda.
\]
For \(\Gamma_0\), the Lagrangian depends locally on the ordinary first derivatives of the field. 
After imposing the usual elastic symmetry conditions, in particular dependence of the spatial energy on the symmetrized displacement gradient, quadratic Lagrangians reproduce the standard local harmonic models of linear elasticity and acoustic phonons. 
The role of the present construction is to place these local equations on the global configuration bundle \(Y_\Pi\to X\).

\subsection{Quadratic Lagrangians for the Translational Sector}
\label{subsec:quadratic_phonon_lagrangians}

At low energy, the natural first approximation is a Lagrangian quadratic in the covariant differential. 
The Euler--Lagrange operator for a general first-order Lagrangian was introduced above, see \eqref{eq:EL_operator_local}--\eqref{eq:EL_equations}. 
We now specialize it to Lagrangians that factor through
\[
D_\Gamma:J^1Y_\Pi\to \pi_Y^*T^*X\otimes VY_\Pi.
\]

Let
\begin{equation}
	\mathbf G:
	\left(\pi_Y^*T^*X\otimes VY_\Pi\right)
	\times_{Y_\Pi}
	\left(\pi_Y^*T^*X\otimes VY_\Pi\right)
	\longrightarrow \mathbb R
	\label{eq:bundle_metric_quadratic}
\end{equation}
be a smooth symmetric bilinear form on the vector bundle
\[
\pi_Y^*T^*X\otimes VY_\Pi \to Y_\Pi.
\]
After choosing a volume form \(\omega\) on \(X\), this bilinear form defines the quadratic Lagrangian
\begin{equation}
	L_{\mathbf G}
	=
	\frac12\,\mathbf G(D_\Gamma,D_\Gamma)\,\omega.
	\label{eq:quadratic_lagrangian_global}
\end{equation}
Here \(\mathbf G(D_\Gamma,D_\Gamma)\) denotes the function on \(J^1Y_\Pi\) obtained by applying \(\mathbf G\) to the covariant differential \(D_\Gamma\).

If \(\sigma:X\to Y_\Pi\) is a section, then the induced Lagrangian density is
\begin{equation}
	(J^1\sigma)^*L_{\mathbf G}
	=
	\frac12\,(\sigma^*\mathbf G)
	\bigl(\nabla^\Gamma\sigma,\nabla^\Gamma\sigma\bigr)\,\omega,
	\label{eq:quadratic_lagrangian_on_section}
\end{equation}
where
\[
\nabla^\Gamma\sigma
=
D_\Gamma\circ J^1\sigma
\in
\Gamma\!\left(X,\,T^*X\otimes \sigma^*VY_\Pi\right).
\]
Thus the Lagrangian depends on the section \(\sigma\) through its first covariant differential.

In local coordinates \((x^\lambda,\tau^i,\tau^i_\lambda)\) on \(J^1Y_\Pi\), set
\begin{equation}
	\xi^i_\lambda:=\tau^i_\lambda-\Gamma^i_\lambda(x,\tau).
	\label{eq:xi_definition_quadratic}
\end{equation}
Then the Lagrangian \eqref{eq:quadratic_lagrangian_global} is written as
\begin{equation}
	L_{\mathbf G}
	=
	\frac12\,G^{\lambda\mu}_{ij}(x,\tau)\,
	\xi^i_\lambda \xi^j_\mu\,\omega,
	\label{eq:quadratic_lagrangian_local}
\end{equation}
where the coefficients \(G^{\lambda\mu}_{ij}\) are symmetric under the simultaneous interchange
\[
(\lambda,i)\longleftrightarrow(\mu,j).
\]
These local coefficients are not arbitrary functions on \(\mathbb R^d\). In local lifts of the torus coordinate they are periodic with respect to the lattice,
\[
G^{\lambda\mu}_{ij}(x,\tau+\ell)=G^{\lambda\mu}_{ij}(x,\tau),
\qquad \ell\in\Pi,
\]
and on overlaps they satisfy the tensorial \(P\)-equivariance rule induced by the affine action on \(T_\Pi\) and its linear differential on vertical tangent vectors. Explicitly, if
\[
\tau_\beta=\rho(g_{\alpha\beta}^{-1})\tau_\alpha,
\qquad
\xi_\beta=A_{g_{\alpha\beta}}^{-1}\xi_\alpha,
\]
then, writing \(B=A_{g_{\alpha\beta}}^{-1}\), one has
\[
G^{\lambda\mu}_{\alpha,ij}(x,\tau_\alpha)
=
B^r{}_{i}B^s{}_{j}
G^{\lambda\mu}_{\beta,rs}(x,\tau_\beta).
\]
These periodicity and equivariance conditions are precisely what make \(\mathbf G\) a globally defined bilinear form on \(\pi_Y^*T^*X\otimes VY_\Pi\).

If \(\sigma\) is represented locally by functions \(\tau^i=\sigma^i(x)\), then
\begin{equation}
	(J^1\sigma)^*L_{\mathbf G}
	=
	\frac12\,G^{\lambda\mu}_{ij}(x,\sigma(x))\,
	\nabla^\Gamma_\lambda\sigma^i\,
	\nabla^\Gamma_\mu\sigma^j\,\omega,
	\label{eq:quadratic_lagrangian_section_local}
\end{equation}
where
\[
\nabla^\Gamma_\lambda\sigma^i
=
\partial_\lambda\sigma^i-\Gamma^i_\lambda(x,\sigma(x)).
\]

Lagrangians of the form \eqref{eq:quadratic_lagrangian_global} are the bundle-theoretic analogue of the harmonic approximation in elasticity: they describe low-energy fluctuations of translational order through the first covariant differential. 
The use of \(D_\Gamma\), rather than ordinary partial derivatives, is what makes the formulation global on
\[
Y_\Pi=Q\times_P T_\Pi.
\]

At the same time, a general symmetric bilinear form \(\mathbf G\) on \(\pi_Y^*T^*X\otimes VY_\Pi\) is more general than an ordinary elastic energy. 
The Goldstone, or acoustic, interpretation requires a derivative-only leading theory: in adapted local trivializations the leading Lagrangian is invariant under constant translations of the torus coordinate and contains no potential term for \(\tau\). 
Equivalently, the coefficients relevant for the homogeneous harmonic approximation are independent of the absolute torus phase, up to the prescribed \(P\)-equivariance. 
If explicit \(\tau\)-dependent terms are retained, they should be interpreted as pinning or other controlled symmetry-breaking effects, and the gapless acoustic conclusion need not follow. 
To recover the acoustic phonon model after eliminating independent rotational Goldstone variables, one also imposes the usual objectivity condition: in the leading spatial sector the energy depends on the symmetrized displacement gradient, namely on the small-strain tensor, and not on a pure infinitesimal rigid rotation. 
This condition is implemented explicitly in Appendix~\ref{app:cubic_example} through \(\varepsilon_{ij}\) and the standard elastic constants.

\medskip

We now derive the equations of motion for the quadratic Lagrangian. 
Using the notation
\[
\xi^i_\lambda=\tau^i_\lambda-\Gamma^i_\lambda(x,\tau),
\]
write
\begin{equation}
	\mathcal L_{\mathbf G}
	=
	\frac12\,G^{\lambda\mu}_{ij}(x,\tau)\,
	\xi^i_\lambda \xi^j_\mu .
	\label{eq:quadratic_density_xi}
\end{equation}
Assume that \(G^{\lambda\mu}_{ij}\) is symmetric under the simultaneous interchange
\[
(\lambda,i)\longleftrightarrow(\mu,j).
\]
Then
\begin{equation}
	\frac{\partial \mathcal L_{\mathbf G}}{\partial \tau^k_\nu}
	=
	G^{\nu\mu}_{kj}(x,\tau)\,\xi^j_\mu,
	\label{eq:quadratic_first_momentum}
\end{equation}
and
\begin{equation}
	\frac{\partial \mathcal L_{\mathbf G}}{\partial \tau^k}
	=
	\frac12\,
	\frac{\partial G^{\lambda\mu}_{ij}}{\partial \tau^k}\,
	\xi^i_\lambda \xi^j_\mu
	-
	G^{\lambda\mu}_{ij}\,
	\frac{\partial \Gamma^i_\lambda}{\partial \tau^k}\,
	\xi^j_\mu .
	\label{eq:quadratic_first_tau}
\end{equation}
Substitution into the Euler--Lagrange equations gives
\begin{equation}
	\frac12\,
	\frac{\partial G^{\lambda\mu}_{ij}}{\partial \tau^k}\,
	\xi^i_\lambda \xi^j_\mu
	-
	G^{\lambda\mu}_{ij}\,
	\frac{\partial \Gamma^i_\lambda}{\partial \tau^k}\,
	\xi^j_\mu
	-
	d_\nu\!\left(
	G^{\nu\mu}_{kj}\,\xi^j_\mu
	\right)
	=0.
	\label{eq:quadratic_EL_explicit}
\end{equation}

The structure of \eqref{eq:quadratic_EL_explicit} shows that the leading second-order term is
\[
-\,G^{\nu\mu}_{kj}(x,\tau)\,\tau^j_{\nu\mu}.
\]
All remaining terms contain at most first derivatives of the field, together with derivatives of the coefficients \(G^{\lambda\mu}_{ij}\) and \(\Gamma^i_\lambda\). 
Thus a quadratic Lagrangian of the above form gives, in general, a quasilinear second-order system. 
Its principal symbol determines the type of the system and the propagation of small disturbances.
The system takes a particularly simple form in an admissible local trivialization for the canonical flat Ehresmann connection \(\Gamma_0\), where
\[
\Gamma^i_{0\,\lambda}=0.
\]
Then
\[
\xi^i_\lambda=\tau^i_\lambda,
\]
and the quadratic Lagrangian becomes
\begin{equation}
	L_{\mathbf G}
	=
	\frac12\,G^{\lambda\mu}_{ij}(x,\tau)\,
	\tau^i_\lambda\tau^j_\mu\,\omega.
	\label{eq:quadratic_lagrangian_flat_trivialization}
\end{equation}
The Euler--Lagrange equations reduce to
\begin{equation}
	\frac12\,
	\frac{\partial G^{\lambda\mu}_{ij}}{\partial \tau^k}\,
	\tau^i_\lambda\tau^j_\mu
	-
	d_\nu\!\left(
	G^{\nu\mu}_{kj}\,\tau^j_\mu
	\right)
	=0.
	\label{eq:quadratic_EL_flat_explicit}
\end{equation}
If the coefficients \(G^{\lambda\mu}_{ij}\) are independent of \(\tau\), this becomes
\begin{equation}
	d_\nu\!\left(
	G^{\nu\mu}_{kj}(x)\,\tau^j_\mu
	\right)=0.
	\label{eq:quadratic_EL_tau_independent}
\end{equation}
For constant coefficients one obtains the linear second-order system
\begin{equation}
	G^{\nu\mu}_{kj}\,\tau^j_{\nu\mu}=0.
	\label{eq:quadratic_EL_constant_coefficients}
\end{equation}

The leading part of the Euler--Lagrange equations corresponding to 
\eqref{eq:quadratic_lagrangian_global} is
\[
-\,G^{\nu\mu}_{kj}(x,\tau)\,\tau^j_{\nu\mu}.
\]
Thus, at a covector \(\zeta\in T_x^*X\), the principal symbol is the matrix
\begin{equation}
	\sigma_{\mathrm{pr}}(\zeta)^k{}_j
	=
	-\,G^{\nu\mu}_{kj}(x,\tau)\,\zeta_\nu\zeta_\mu.
	\label{eq:principal_symbol_quadratic}
\end{equation}
Up to this conventional overall sign, the principal symbol is determined by the bilinear form \(\mathbf G\). 
This symbol controls the type of the linearized system and the propagation of small disturbances.

\subsection{Noether currents in brief}
\label{subsec:noether_brief}

The same first-order formalism also identifies the conserved currents associated with variational symmetries. 
For a Lagrangian density \(L=\mathcal L\omega\), set
\begin{equation}
	\pi_i^\lambda:=\frac{\partial\mathcal L}{\partial \tau^i_\lambda}.
	\label{eq:canonical_momenta_brief}
\end{equation}
For the quadratic Lagrangian \eqref{eq:quadratic_lagrangian_local},
\begin{equation}
	\pi_i^\lambda
	=
	G^{\lambda\mu}_{ij}(x,\tau)\,\xi^j_\mu,
	\qquad
	\xi^j_\mu=\tau^j_\mu-\Gamma^j_\mu(x,\tau).
	\label{eq:quadratic_momenta_brief}
\end{equation}
Thus the momenta are built from the covariant differential of the translational order parameter. 
When the Lagrangian is invariant under the corresponding infinitesimal torus shift, for example in the homogeneous derivative-only case with \(a \in \Gamma(X,Q\times_P\mathbb R^d)\) represented locally by a constant or covariantly constant shift, the corresponding internal current is locally
\begin{equation}
	J^\lambda_{(a)}=\pi_i^\lambda a^i
	\label{eq:internal_current_brief}
\end{equation}
up to the usual improvement or boundary term. 
Similarly, if a vector field \(\zeta=\zeta^\mu\partial_\mu\) on the base has a horizontal lift that is a variational symmetry, the associated canonical energy-momentum tensor is
\begin{equation}
	T^\lambda{}_{\mu}
	=
	\pi_i^\lambda\bigl(\tau^i_\mu-\Gamma^i_\mu\bigr)
	-
	\delta^\lambda_\mu\mathcal L.
	\label{eq:canonical_energy_momentum_tensor}
\end{equation}
On solutions, the corresponding currents satisfy the standard weak conservation laws. 
These formulae are local representatives of globally defined associated-bundle objects; under a change of crystallographic trivialization they transform by the same \(P\)-representation as the covariant differential.

\subsection{Linear Approximation and Dispersion Relations}
\label{subsec:linear_dispersion}

To obtain the local phonon spectrum, consider small perturbations around an equilibrium section. Let
\begin{equation}
	\sigma_0:X\to Y_\Pi
	\label{eq:equilibrium_section}
\end{equation}
be a smooth solution of the equations of motion satisfying
\begin{equation}
	\nabla^{\Gamma_0}\sigma_0=0,
	\label{eq:equilibrium_covariantly_constant}
\end{equation}
where \(\Gamma_0\) is the canonical flat connection constructed above. 
Geometrically, \(\sigma_0\) represents an undeformed reference configuration on the fixed crystallographic background. 
The existence of such a global covariantly constant section is not automatic: it requires the holonomy action of the flat torus bundle to fix a point of \(T_\Pi\). 
If this condition fails, the linearization below should be understood locally, for example on a simply connected patch where a background lift has been chosen.

In a local trivialization, the translational order parameter is represented by a map into the torus
\[
T_\Pi=\mathbb R^d/\Pi.
\]
Therefore linearization is a local operation. 
Choose a local chart on \(T_\Pi\) near the image of the background representative \(\sigma_{0,\alpha}\), and write
\begin{equation}
	\tau^i=\tau_0^i+\varphi^i,
	\label{eq:small_fluctuation}
\end{equation}
where \(\varphi=(\varphi^i)\) is a small \(\mathbb R^d\)-valued fluctuation in this chart. 
In a local trivialization compatible with the canonical flat Ehresmann connection,
\[
\Gamma^i_{0\,\lambda}=0.
\]
Hence \(\nabla^{\Gamma_0}\) is represented locally by the ordinary first derivative with respect to the base coordinates. 
In particular, the condition \eqref{eq:equilibrium_covariantly_constant} implies that the local coordinates \(\tau_0^i\) of the equilibrium section are constant.

\medskip

Consider the quadratic Lagrangian
\begin{equation}
	L_{\mathbf G}
	=
	\frac12\,G^{\lambda\mu}_{ij}(x,\tau)\,
	\tau^i_\lambda\tau^j_\mu\,\omega,
	\label{eq:quadratic_lagrangian_for_linearization}
\end{equation}
written in an admissible local trivialization for the canonical flat connection \(\Gamma_0\). 
Expanding around the covariantly constant background \(\sigma_0\), with
\[
\tau^i=\tau_0^i+\varphi^i,
\]
gives the quadratic part
\begin{equation}
	L^{(2)}
	=
	\frac12\,G^{\lambda\mu}_{ij}(x,\tau_0)\,
	\partial_\lambda\varphi^i\,\partial_\mu\varphi^j\,\omega.
	\label{eq:linearized_lagrangian_general}
\end{equation}
This quadratic Lagrangian determines the linearized equations of motion.

If the equilibrium configuration is locally homogeneous, the coefficients 
\(G^{\lambda\mu}_{ij}(x,\tau_0)\) may be replaced by their constant values. 
Then
\begin{equation}
	L^{(2)}
	=
	\frac12\,G^{\lambda\mu}_{ij}\,
	\partial_\lambda\varphi^i\,\partial_\mu\varphi^j\,\omega,
	\label{eq:linearized_lagrangian_constant}
\end{equation}
where \(G^{\lambda\mu}_{ij}\) are constant coefficients symmetric under the simultaneous interchange
\[
(\lambda,i)\longleftrightarrow (\mu,j).
\]

The corresponding linearized Euler--Lagrange equations are
\begin{equation}
	G^{\lambda\mu}_{ij}\,\partial_\lambda\partial_\mu\varphi^j=0.
	\label{eq:linearized_EL_general}
\end{equation}
Thus one obtains a linear second-order system for the small displacement fluctuations. 
At a covector \(\zeta=\zeta_\lambda dx^\lambda\in T_x^*X\), its principal symbol is the matrix
\begin{equation}
	\sigma_{\mathrm{pr}}(\zeta)^i{}_j
	=
	G^{\lambda\mu}_{ij}\,\zeta_\lambda\zeta_\mu.
	\label{eq:principal_symbol_linearized}
\end{equation}
This symbol determines the type of the linearized system and its dispersion properties.

\medskip

In the dynamical setting
\[
X=\mathbb R_t\times M,
\]
write local coordinates as \(x^\lambda=(t,x^a)\), where \(a=1,\dots,d\). 
Then the density corresponding to \eqref{eq:linearized_lagrangian_constant} can be decomposed as
\begin{equation}
	\mathcal L^{(2)}
	=
	\frac12\,\rho_{ij}\,\partial_t\varphi^i\,\partial_t\varphi^j
	+
	K^a_{ij}\,\partial_t\varphi^i\,\partial_a\varphi^j
	-
	\frac12\,C^{ab}_{ij}\,\partial_a\varphi^i\,\partial_b\varphi^j,
	\label{eq:linearized_lagrangian_split}
\end{equation}
where
\[
\rho_{ij}=G^{tt}_{ij},
\qquad
K^a_{ij}=G^{ta}_{ij},
\qquad
C^{ab}_{ij}=-\,G^{ab}_{ij}.
\]
The matrix \(\rho_{ij}\) is the mass-density matrix, while \(C^{ab}_{ij}\) gives the spatial elastic coefficients in the linear approximation.

If, in addition, the theory is invariant under time reversal, with \(\varphi\) even and \(\partial_t\varphi\) odd, then the mixed terms linear in \(\partial_t\varphi\) are excluded. 
Thus \(K^a_{ij}=0\), and the quadratic Lagrangian density reduces to
\begin{equation}
	\mathcal L^{(2)}
	=
	\frac12\,\rho_{ij}\,\partial_t\varphi^i\,\partial_t\varphi^j
	-
	\frac12\,C^{ab}_{ij}\,\partial_a\varphi^i\,\partial_b\varphi^j.
	\label{eq:linearized_lagrangian_time_reversal}
\end{equation}
The corresponding equations of motion are
\begin{equation}
	\rho_{ij}\,\partial_t^2\varphi^j
	-
	C^{ab}_{ij}\,\partial_a\partial_b\varphi^j
	=
	0.
	\label{eq:linearized_phonon_equation}
\end{equation}

\medskip

To obtain the local dispersion relations, look for plane-wave solutions of the form
\begin{equation}
	\varphi^i(t,x)=\alpha^i\exp\!\bigl(\mathrm i(k_ax^a-\omega t)\bigr),
	\label{eq:plane_wave_ansatz}
\end{equation}
where \(\alpha^i\) is a constant polarization amplitude.
Substitution of \eqref{eq:plane_wave_ansatz} into \eqref{eq:linearized_phonon_equation} gives the algebraic system
\begin{equation}
	\bigl(\mathcal C_{ij}(k)-\omega^2\rho_{ij}\bigr)\alpha^j=0,
	\label{eq:christoffel_equation}
\end{equation}
where
\begin{equation}
	\mathcal C_{ij}(k):=C^{ab}_{ij}\,k_ak_b
	\label{eq:christoffel_matrix}
\end{equation}
is the Christoffel, or acoustic, matrix associated with the wave vector \(k\) \cite{NyePhysicalProperties,MusgraveCrystalAcoustics}.
The notation \(\mathcal C\) is used to avoid confusion with the connection \(\Gamma_0\).
Nontrivial solutions exist if and only if
\begin{equation}
	\det\bigl(\mathcal C(k)-\omega^2\rho\bigr)=0.
	\label{eq:dispersion_relation_general}
\end{equation}
This is the local dispersion equation for the acoustic phonon modes. 
Equivalently, one has the generalized eigenvalue problem
\begin{equation}
	\mathcal C(k)\,\alpha=\omega^2\rho\,\alpha.
	\label{eq:eigenvalue_problem_phonons}
\end{equation}
If \(\rho\) is nondegenerate, this may be written as
\[
\rho^{-1}\mathcal C(k)\,\alpha=\omega^2\alpha.
\]
Since
\[
\mathcal C(k)=|k|^2\mathcal C(\hat k),
\qquad
\hat k=\frac{k}{|k|},
\]
the long-wavelength dispersion is acoustic:
\begin{equation}
	\omega_\alpha(k)=c_\alpha(\hat k)\,|k|.
	\label{eq:acoustic_dispersion}
\end{equation}
Here \(c_\alpha(\hat k)^2\) are the eigenvalues of \(\rho^{-1}\mathcal C(\hat k)\), and \(c_\alpha(\hat k)\) are the corresponding propagation velocities.

\medskip

In the anisotropic case the coefficients of the linearized Lagrangian are not arbitrary. 
They are constrained by the point group \(P\) of the crystal. 
Since the configuration bundle is
\[
Y_\Pi=Q\times_P T_\Pi,
\]
the tensorial quantities entering the Lagrangian must be compatible with the \(P\)-action. 
Equivalently, in local frames adapted to the \(P\)-reduction \(Q\), the coefficients must transform according to the point-group representation. 
In particular, the bilinear form \(\mathbf G\), and hence the induced coefficients \(\rho_{ij}\) and \(C^{ab}_{ij}\), are constrained by the point-group symmetry.
In a local orthonormal frame adapted to the reduction
\[
Q\subset F_g(M),
\]
the point group acts on both spatial indices and displacement indices. 
Thus, for every \(R\in P\subset O(d)\), the coefficients satisfy
\begin{equation}
	\rho_{ij}
	=
	R_i{}^k\,R_j{}^l\,\rho_{kl},
	\label{eq:P_invariance_density}
\end{equation}
and
\begin{equation}
	C^{ab}_{ij}
	=
	R^a{}_c\,R^b{}_d\,R_i{}^k\,R_j{}^l\,C^{cd}_{kl}.
	\label{eq:P_invariance_elastic_tensor}
\end{equation}
Therefore the point group \(P\) selects the admissible components of the elastic coefficients and determines the allowed anisotropy of the dispersion relation \eqref{eq:dispersion_relation_general}.

Equivalently, the Christoffel matrix \eqref{eq:christoffel_matrix} satisfies the covariance condition
\begin{equation}
	\mathcal C(Rk)=R\,\mathcal C(k)\,R^{-1},
	\qquad R\in P.
	\label{eq:christoffel_covariance}
\end{equation}
Together with the \(P\)-invariance of \(\rho\), this implies that the generalized eigenvalues of the pair \((\mathcal C(k),\rho)\), and hence the unordered set of phonon frequencies \(\omega_\alpha(k)\), are constant on point-group orbits in wave-vector space. 
Thus the point group determines the allowed anisotropy of the sound velocities.

\medskip

In the isotropic limit, the coefficients are invariant under the full orthogonal group \(O(d)\). 
After choosing a local orthonormal frame and identifying spatial and internal indices, one may write
\begin{equation}
	\rho_{ij}=\rho\,\delta_{ij},
	\label{eq:isotropic_density}
\end{equation}
and
\begin{equation}
	C^{ab}_{ij}
	=
	\lambda\,\delta^a_i\delta^b_j
	+
	\mu\bigl(\delta_{ij}\delta^{ab}+\delta^a_j\delta^b_i\bigr),
	\label{eq:isotropic_elasticity_tensor}
\end{equation}
where \(\lambda\) and \(\mu\) are the Lam\'e coefficients. 
Then the Christoffel matrix becomes
\begin{equation}
	\mathcal C_{ij}(k)
	=
	\mu |k|^2\delta_{ij}
	+
	(\lambda+\mu)\,k_i k_j.
	\label{eq:isotropic_christoffel}
\end{equation}
The dispersion equation therefore splits into one longitudinal branch and \(d-1\) transverse branches:
\begin{equation}
	\omega_L^2
	=
	\frac{\lambda+2\mu}{\rho}\,|k|^2,
	\qquad
	\omega_T^2
	=
	\frac{\mu}{\rho}\,|k|^2.
	\label{eq:longitudinal_transverse_dispersion}
\end{equation}
The longitudinal mode has polarization parallel to \(k\), while the transverse modes have polarizations orthogonal to \(k\).

\medskip

The quadratic fluctuation theory obtained from the associated bundle \(Y_\Pi\to X\) therefore reproduces the standard local acoustic spectrum. 
The additional information retained by the geometric formulation is global: the translational order parameter is torus-valued, the linearized fields are sections of the associated vector bundle \(Q\times_P\mathbb R^d\), and the coefficients entering the dispersion relation must be compatible with the point-group action. 
Thus the point-group reduction does not change the local form of the acoustic equations, but it constrains the allowed anisotropy and the gluing of phonon modes across different crystallographic trivializations.

\section{Conclusion}
\label{sec:conclusion}

This paper has formulated the translational, or phonon, sector of a crystalline phase as a global field theory on an associated torus bundle. 
The orientational part of the crystalline order was kept fixed, in the form of a \(P\)-reduction \(Q\subset F_g(M)\). 
With this crystallographic background fixed, the translational order parameter is not, in general, a globally defined \(\mathbb R^d\)-valued displacement field. 
It is a section of
\[
Y_\Pi=Q\times_P T_\Pi,
\qquad
T_\Pi=\mathbb R^d/\Pi,
\]
the associated torus bundle constructed above.

The formulation also records the distinction between symmorphic and nonsymmorphic crystallographic groups. 
In the symmorphic case, the point group acts linearly on \(T_\Pi\). 
In the nonsymmorphic case, the action is affine, as in \eqref{eq:affine_torus_action}. 
The affine shifts enter the gluing of local torus-valued representatives of the order parameter, whereas the vertical linearized theory depends only on the differential of this action, namely on the usual linear point-group action.

The central geometric step was the construction of a flat Ehresmann connection on \(Y_\Pi\to X\), canonical relative to the fixed \(P\)-bundle \(Q\). 
It exists because the structure group is discrete: in admissible local trivializations the transition functions are locally constant, so the horizontal spaces \(TU_\alpha\oplus0\) glue globally. 
In jet language this connection is a section of \(J^1Y_\Pi\to Y_\Pi\). 
Proposition~\ref{prop:global_phonon_covariant_differential} gives the corresponding covariant differential,
\[
\nabla^{\Gamma_0}\sigma\in
\Gamma\bigl(X,T^*X\otimes(Q\times_P\mathbb R^d)\bigr).
\]
For the canonical flat connection, it is represented locally by the ordinary derivative of a displacement lift, but globally it is a well-defined section of the associated vector bundle \(T^*X\otimes(Q\times_P\mathbb R^d)\). 
The twisted circle example in Sec.~\ref{subsec:twisted_example} shows explicitly how this can hold even when no ordinary global displacement function exists.

Using this covariant differential, we wrote a first-order Lagrangian formulation on \(J^1Y_\Pi\). 
The most relevant low-energy class consists of Lagrangians that factor through \(D_\Gamma\) and are quadratic in the covariant differential. 
Their Euler--Lagrange equations give a quasilinear second-order system in general and reduce, for constant coefficients in an admissible flat trivialization, to the standard local linear equations of elasticity.

The Noether currents fit the same picture. 
Whenever internal torus shifts or symmetries of the base \(X\) are variational symmetries of the chosen Lagrangian, the corresponding local currents are built from the canonical momenta and, for base symmetries, from the canonical energy-momentum tensor \eqref{eq:canonical_energy_momentum_tensor}. 
They transform according to the same associated-bundle structures as the covariant differential.

The linearized theory recovers the usual local acoustic phonon spectrum. 
The Christoffel matrix, the dispersion relation, and, in the isotropic limit, the longitudinal/transverse decomposition all take their standard forms. 
What the bundle construction adds is the global interpretation of these objects. 
The translational order parameter is torus-valued from the start, the linearized fields are sections of \(Q\times_P\mathbb R^d\), and the coefficients in the quadratic theory must be compatible with the point-group action. 
Thus the point-group reduction does not change the local acoustic equations, but it constrains the allowed anisotropy of the sound velocities and the gluing of phonon modes across crystallographic trivializations.

Appendix~\ref{app:cubic_example} checks that the formalism reproduces the textbook three-constant theory of a cubic crystal.

The main limitation of the paper is deliberate: the orientational sector is fixed, and the translational field is assumed to be smooth in the main text. 
A fuller theory would allow defects such as dislocations and disclinations from the beginning, treat the \(P\)-reduction dynamically, and incorporate additional background structures such as magnetic or time-reversing symmetries. 
The construction presented here should be viewed as the smooth background layer of such a theory. 
It identifies the global configuration space for the translational order parameter and the covariant differential that replaces the local displacement gradient.

\appendix

\section{Local Cubic Elasticity from the Global Translational Field}
\label{app:cubic_example}

This appendix gives a compact local check of the global construction for a three-dimensional cubic crystal. 
Let \(X=\mathbb R_t\times M\), \(\dim M=3\), and let \(Q_M\subset F_g(M)\) be a reduction to the cubic point group \(P=O_h\subset O(3)\), or to its orientation-preserving subgroup. 
In the dynamical setting we use \(Q=\mathrm{pr}_M^*Q_M\to X\). 
After fixing units in which the cubic lattice spacing is one, take \(\Pi\simeq\mathbb Z^3\), \(T_\Pi=\mathbb R^3/\Pi\), and
\[
Y_\Pi=Q\times_P T_\Pi\to X.
\]
Thus the translational order parameter remains the global section \(\sigma:X\to Y_\Pi\).

Linearization around a covariantly constant background section gives the associated vector bundle \(E_\Pi:=Q\times_P\mathbb R^3\to X\). 
The linear displacement field is therefore globally a section \(\varphi\in\Gamma(X,E_\Pi)\), not a globally defined \(\mathbb R^3\)-valued function unless \(E_\Pi\) is trivial. 
Since \(Q_M\subset F_g(M)\), there is a natural identification \(E_\Pi\simeq\mathrm{pr}_M^*TM\), given in a cubic frame by \([q,v]\mapsto q(v)\).

Let
\[
\beta:=\nabla^{\Gamma_0}\sigma
\]
denote the covariant differential of the translational field. 
After linearization around the background section, it is regarded as a section
\[
\beta\in
\Gamma\bigl(X,T^*X\otimes E_\Pi\bigr).
\]
Its temporal part \(v:=\beta(\partial_t)\in\Gamma(X,E_\Pi)\) is the velocity field, while its spatial part
\[
F:=\beta|_{\mathrm{pr}_M^*TM}
\in
\Gamma\bigl(X,\mathrm{pr}_M^*T^*M\otimes E_\Pi\bigr)
\]
is the global displacement-gradient object. 
Using \(E_\Pi\simeq\mathrm{pr}_M^*TM\), define the global small-strain tensor by
\[
\varepsilon:=\operatorname{sym}(F^\flat)
\in
\Gamma\bigl(X,\mathrm{pr}_M^*\mathrm{Sym}^2T^*M\bigr).
\]
In an admissible crystallographic trivialization over a contractible open set, \(\varphi\) is represented by the usual local displacement \(u=(u^1,u^2,u^3)\), and
\[
\beta=\partial_\lambda u^i\,dx^\lambda\otimes e_i,
\qquad
\varepsilon_{ij}=\frac12(\partial_i u_j+\partial_j u_i).
\]

The cubic elastic energy is the local representative of a \(P\)-invariant quadratic form \(W_Q\) on \(\mathrm{pr}_M^*\mathrm{Sym}^2T^*M\). 
In a cubic frame this representative is
\[
W_{\mathrm{cub}}(\varepsilon)
=
\frac12 C_{11}(\varepsilon_{11}^2+\varepsilon_{22}^2+\varepsilon_{33}^2)
+
C_{12}(\varepsilon_{11}\varepsilon_{22}
+\varepsilon_{22}\varepsilon_{33}
+\varepsilon_{33}\varepsilon_{11})
+
2C_{44}(\varepsilon_{12}^2+\varepsilon_{23}^2+\varepsilon_{31}^2).
\]
The corresponding quadratic Lagrangian density is globally
\[
\mathcal L^{(2)}
=
\frac12\rho\,\langle v,v\rangle-W_Q(\varepsilon),
\]
and in the above local frame it becomes the standard cubic elastic Lagrangian. 
The stress tensor is defined by \(\delta W_{\mathrm{cub}}=\sigma_{ij}\delta\varepsilon_{ij}\), hence
\[
\sigma_{ii}=C_{11}\varepsilon_{ii}+C_{12}\sum_{j\neq i}\varepsilon_{jj},
\qquad
\sigma_{ij}=2C_{44}\varepsilon_{ij}\quad(i\neq j),
\]
and the local equations of motion are
\[
\rho\,\partial_t^2u_i=\partial_j\sigma_{ij}.
\]

For a plane wave \(u_i=\alpha_i\exp\!\bigl(\mathrm i(k\cdot x-\omega t)\bigr)\), the equations become
\[
\mathcal C_{ij}(k)\alpha_j=\rho\omega^2\alpha_i,
\]
where the cubic Christoffel matrix \cite{NyePhysicalProperties,MusgraveCrystalAcoustics} is compactly written as
\[
\mathcal C_{ij}(k)
=
C_{44}|k|^2\delta_{ij}
+
(C_{12}+C_{44})k_i k_j
+
(C_{11}-C_{12}-2C_{44})k_i^2\delta_{ij},
\]
with no summation over \(i\) in the last term. 
This gives, for example, in the \([100]\) direction
\[
\omega_L^2=\frac{C_{11}}{\rho}k^2,
\qquad
\omega_T^2=\frac{C_{44}}{\rho}k^2,
\]
with a doubly degenerate transverse branch, while in the \([111]\) direction
\[
\omega_L^2=\frac{C_{11}+2C_{12}+4C_{44}}{3\rho}k^2,
\qquad
\omega_T^2=\frac{C_{11}-C_{12}+C_{44}}{3\rho}k^2.
\]
The standard positivity conditions are \(C_{44}>0\), \(C_{11}-C_{12}>0\), \(C_{11}+2C_{12}>0\), and \(\rho>0\).

Thus the usual cubic acoustic equations are recovered as local component expressions of globally defined objects: the fluctuation field is a section of \(E_\Pi\), the velocity and displacement gradient are the temporal and spatial parts of the local representative of \(\nabla^{\Gamma_0}\sigma\), and the cubic elastic energy is the local representative of a \(P\)-invariant quadratic form determined by the reduction \(Q\).

\end{document}